\newcommand{\Esp}{\mathbb{E}}
\newcommand{\Mcal}{\mathcal{M}}
\renewcommand{\Pr}{\mathbb{P}}
\newtheorem{rem}{Remark}
\newtheorem{algorithm}{Algorithm}
\newtheorem{Prop}{Proposition}
\newcommand{\SR}[2]{\textcolor{gray}{#1}\textcolor{blue}{#2}}
\title{A robust model-based clustering based on the geometric median and the Median Covariation Matrix}
\author{Antoine Godichon-Baggioni$^{1*}$, Stéphane Robin$^{1}$}
\date{}
\begin{document}
\maketitle

$^1$ : Sorbonne Université, CNRS, Laboratoire de Probabilités, Statistique et Modélisation, F-75005 Paris, France

$^*$ : \url{antoine.godichon\_baggioni@upmc.fr}

\begin{abstract}
Grouping observations into homogeneous groups is a recurrent task in statistical data analysis. We consider Gaussian Mixture Models, which are the most famous parametric model-based clustering method. We propose a new robust approach for model-based clustering, which consists in a modification of the EM algorithm (more specifically, the M-step) by replacing the estimates of the mean and the variance by robust versions based on the median and  the median covariation matrix.  All the proposed methods are available in the R package \texttt{RGMM} accessible on CRAN.
\end{abstract}

\noindent\textbf{Keywords: } EM algorithm; Geometric median; Median Covariation Matrix; Mixture models; Robust statistics

\section{Introduction} \label{sec:intro}
\paragraph{Problem.}
Grouping observations into homogeneous groups (or "clusters") is one of the most typical tasks in statistical data analysis. Among the many methods that have been proposed over the years, model-based clustering is one of the most popular (\cite{MaP00}). Model-based clustering relies on the assumption that the observed data come from a mixture model, meaning that the observations can be divided into a finite (but often unknown) number of clusters, and that each cluster is characterized by a specific distribution, often called the {\em emission} distribution.

One reason for the popularity of model-based clustering is that the emission distributions of the clusters are usually chosen with a parametric class (e.g. a multivariate Gaussian), which makes the interpretation of the results particularly easy. Another reason for this popularity is that the maximum likelihood estimates of the parameters can be obtained via the well-known EM algorithm (\cite{DLR77}), accompanied by statistical guarantees. 

Nevertheless, one of the weaknesses of model-based clustering methods is their sensitivity to misspecification of emission distributions or to the presence of (possibly numerous) outliers. In both cases, this results in a high proportion of misclassified observations  or a poor estimate of the number of clusters \cite{GGM10}.


\paragraph{Robust approaches.} 
A series of robust approaches have been proposed to overcome these limitations. These approaches can be classified into three main categories. 
A first track sticks to the parametric framework, but uses emission distributions with heavier tails, {such as multivariate student for Gaussian mixtures} (see, e.g., \cite{PeM00,WaL15,SPI15,RoS19}). Alternatively, a component associated with (possibly improper) parametric distribution can be added, in order to capture outliers (\cite{BaR93,CoH16,CoH17,FaP20}). The outlier distribution may typically be uniform emission over a large domain.
A second approach is to prune the observations, so that the outliers do not weigh too heavily on the estimates \cite{GGM08}. A final approach is to use a dedicated weighted contrast (instead of negative log-likelihood: \cite{GYZ19,GMY21}). The latter approach has some similarities with {the method we propose}.

\paragraph{Our contribution.}
This paper focuses on the robustness of model-based clustering methods to the presence of outliers, meaning that we make no assumptions about how outliers deviate from prescribed emission distributions. To this end, we adopt a fully parametric model-based clustering framework, but modify the EM algorithm (more specifically, the M-step) to ensure robustness. Our   method is valid for any symmetric emission distribution and resorts to the estimation of the median vector and the median covariation matrix  {in place} of the mean vector and the covariance matrix. The estimation of these quantities benefits from a series of recent contributions \cite{VZ00,HC,CG2015}. It was especially proven (see \cite{KrausPanaretos2012}) that for symmetric distributions, the MCM and the usual covariance have the same eigenvectors. Nevertheless, although the recursive estimation of the MCM has been studied in \cite{CG2015}, no method for building the covariance from the MCM has been proposed. In this paper, we first propose methods to get  robust estimates of the covariance  before applying it to robust model-based clustering. 

\paragraph{Outline.}
 The following section gives a comparative introduction to recent algorithms for estimating median vectors and median covariation matrices. In Section \ref{sec:RMM}, we show how these estimates can be used for robust inference of mixture models. The resulting algorithm is given in Section \ref{sec:algo}. A comprehensive simulation study is presented in Section \ref{sec:simul}: different estimators of the median covariation matrix are first compared in Section \ref{sec:simVariance}, then the classification accuracy of the proposed EM-type algorithm is evaluated in Section \ref{sec:simMixture}. All the proposed methods are available in the R package \texttt{RGMM} accessible on CRAN\footnote{\url{cran.r-project.org/package=RGMM}}.

\section{Robust estimation of the emission parameters} \label{sec:variance}
\subsection{Estimating the geometric median}
In {this section, we} consider a random variable $X$  {with} values in $\mathbb{R}^{d}$. The geometric median of $X$ is defined (see \cite{Hal48,Kem87}) by 
\[
m^{*} \in \arg\min_{m \in \mathbb{R}^{d}} \mathbb{E} \left[ \left\| X - m \right\| - \| X \| \right]
\]
where $\|\cdot\|$ stands for the $\ell_2$ norm.
Remark that the term $\| X \| $ just enables not to have to make any assumption on the existence of the first order moment of the  random vector $X$. If the random variable $X$ is not concentrated on a straight line nor single points, the geometric median is uniquely defined \cite{Kem87}. An iterative way to estimate the median, giving i.i.d. copies $X_{1} , \ldots ,X_{n}$ of $X$, is to consider the median as a fix point, leading to the following Weiszfeld algorithm \cite{Weiszfeld1937,VZ00}:
\begin{equation}
\label{def:Weisz} m_{t+1} 
{
= \frac{\sum_{k=1}^{n} {X_{k}}/{\left\| X_{k} - m_{t} \right\|}}{\sum_{k=1}^{n} {1}/{\left\| X_{k} - m_{t} \right\|}}
}
.
\end{equation}
A recursive and faster way (in term of computational cost) to estimate the median is to consider the Averaged Stochastic Gradient (ASG) algorithm (\cite{HC,CCG2015,godichon2015}) defined recursively for all $k \leq n-1$ by
\begin{align*}
m_{k+1} & = m_{k} + \gamma_{k+1} \frac{X_{k+1} - m_{k}}{\left\| X_{k+1} - m_{k} \right\|} \\
\overline{m}_{k+1} & = \overline{m}_{k} + \frac{1}{k+1} \left( m_{k+1} - \overline{m}_{k} \right)
\end{align*}
where $m_{0} = \overline{m}_{0}$ is arbitrarily chosen, $\gamma_{k} = c_{\gamma}k^{-\gamma}$ with $c_{\gamma} > 0 $ and $\gamma \in (1/2,1)$. Remark that under weak assumptions, these estimates (ASG and Weiszfeld) are asymptotically efficient (\cite{HC,VZ00}). Nevertheless, in case of small samples lying in moderate dimension spaces, one should prefer Weiszfeld algorithm and vice versa.

\begin{rem}
Remark that for mixture model, we will consider a weighted version of the median, i.e considering a positive random variable $w$, we will consider
\[
m^{*} = \arg\min_{m} \mathbb{E}\left[ w \left\| X - m \right\| - w \| X \| \right]
\]
leading, considering $\left( X_{1} , w_{1} \right) , \ldots , \left( X_{n},w_{n} \right)$, to the following transformation of the algorithms
\[
m_{t+1} 
= 
{
\frac{\sum_{k=1}^{n} {w_{k}X_{k}} / {\left\| X_{k} - m_{t} \right\|}}{\sum_{k=1}^{n} {w_{k}}/ {\left\| X_{k} - m_{t} \right\|}}
}
\qquad \text{and} \qquad m_{k+1}  = m_{k} + \gamma_{k+1}w_{k+1} \frac{X_{k+1} - m_{k}}{\left\| X_{k+1} - m_{k} \right\|} .
\]
\end{rem}

\subsection{Estimating the Median Covariation Matrix}

The Median Covariation Matrix (MCM for short) is defined (\cite{KrausPanaretos2012}, \cite{CG2015}) by 
\[
V^{*} \in \arg\min_{V \in \mathcal{M}_{d}\left( \mathbb{R} \right)} \mathbb{E}\left[ \left\| \left( X - m^{*} \right)\left( X - m^{*} \right)^{T} - V \right\|_{F} - \left\| \left( X - m^{*} \right)\left( X - m^{*} \right)^{T} \right\|_{F} \right]
\]
where $m^{*}$ is the geometric median of $X$, $\mathcal{M}_{d}\left( \mathbb{R} \right)$ denotes the vectorial space of squared real matrices of size $d \times d$ and $\| . \|_{F}$ is the associated Frobenius norm. In other words, the MCM can be seen as the geometric median of the random matrix $\left( X - m^{*} \right) \left( X - m^{*} \right)^{T}$. Then, given the estimate $m_{T}$ of $m^{*}$ obtained with \eqref{def:Weisz} after $T$ iterations, one can consider the Weiszfeld algorithm \cite{CG2015}
$$
V_{t+1} = 
{
\frac{\sum_{k=1}^{n} {\left\| \left( X_{k} - m_{T} \right) \left( X_{k} - m_{T} \right)^{T} - V_{t} \right\|^{-1}_{F}} {\left( X_{k} - m_{T} \right) \left( X_{k} - m_{T} \right)^{T}} }{\sum_{k=1}^{n} {\left\| \left( X_{k} - m_{T} \right) \left( X_{k} - m_{T} \right)^{T} - V_{t} \right\|^{-1}_{F} } }
}.
$$
In the same way, one can both estimate the median and the MCM recursively considering the ASG algorithm
\begin{align*}
V_{k+1} & = V_{k} + \gamma_{k+1} \frac{\left( X_{k+1} - \overline{m}_{k} \right)\left( X_{k+1} - \overline{m}_{k}\right)^{T} - V_{n}}{\left\| \left( X_{k+1} - \overline{m}_{k} \right)\left( X_{k+1} - \overline{m}_{k}\right)^{T} - V_{n} \right\|_{F}} \\
 \overline{V}_{k+1} & = \overline{V}_{k} + \frac{1}{k+1}\left( V_{k+1} - \overline{V}_{k} \right) , 
\end{align*}
with $\overline{V}_{0} = V_{0}$ symmetric and positive. {First observe} that the estimates are not necessarily positive, but one can project {them}  onto the set of definite positive matrices or consider the modification of the stepsequence proposed in \cite{CG2015}. 
Remark  {also} that, here again, one can consider the weighted version of the MCM and modify the algorithm {accordingly}.

\subsection{Robust estimation of the variance}
Let us  {now suppose} that $X$ admits a second order moment and   denote by $\mu$ and $\Sigma $ its mean and variance (supposed to be positive). All this work relies on the fact that, if the distribution of $X$ is symmetric, $V^{*}$ and $\Sigma$ have the same eigenvectors (\cite{KrausPanaretos2012}). Furthermore, denoting $U=\left( U_{1} , \ldots , U_{d} \right)^{T}  := \Sigma^{-1/2} \left( X - \mu \right)$ and $\delta$ (resp. $\lambda$) the vector of eigenvalues (by decreasing order) of $V^{*}$ (resp. $\Sigma$), one has (\cite{KrausPanaretos2012}): 
\begin{equation}
\label{link:delta:lambda} 
 {\delta_{k}} = \lambda_{k} \mathbb{E}\left[ U_{k}^{2} h\left( \delta , \lambda , U \right)   \right]\left( \mathbb{E}\left[ h \left( \delta , \lambda ,  U \right)  \right] \right)^{-1}
\end{equation}  
where $h(\delta , \lambda , U ) := \left( \sum_{i=1}^{d} \left(  {\delta_{i}} - \lambda_{i}U_{i}^{2} \right)^{2} + \sum_{i \neq j} \lambda_{i}\lambda_{j} U_{i}^{2}U_{j}^{2}\right)^{-1/2}$. In what follows, we will denote by $\Psi_{U}$ the function such that
\begin{equation}
\label{def:Psi} \Psi_{U} \left( V^{*} \right) = \Sigma.
\end{equation}
 Let us suppose from now that the law of $U$ is known and that we know how to simulate i.i.d random variables following this law. For instance, for the Gaussian case, it is  {clear} that $U \sim \mathcal{N}\left( 0 ,I_{d} \right)$. In a same way, for the multivariate Student with $p$ degrees of freedom (with $p \geq 3$), one has $U = \sqrt{p-2} {{N_{d}} / {\sqrt{K_{p}}}}$ where $N_{d} \sim \mathcal{N}\left(0 , I_{d} \right)$ and $ K_{p} \sim \chi_{p}^{2}$ are independent. Finally, for the multivariate case, $U$ follows a standard multivariate law. Let us now consider i.i.d. copies of $X$ and an estimate of the MCM $V_{n}$. Let us denote by $\delta_{n} = \left( \delta_{1,n} , \ldots , \delta_{d,n} \right)$ and $\left( v_{1,n} , \ldots , v_{p,n} \right)$ the eigenvalues (by {decreasing} order) and the associated eigenvectors,  {respectively}. 
  {Robust estimates of the eigenvalues of the variance can be obtained via a Monte-Carlo approach, based on $N$ i.i.d. copies $U_{1},\ldots ,U_{N}$ of $U$.} 
 A first solution to estimate $\lambda$ is so to consider the following fix point algorithm:
\begin{algorithm}[Fix point algorithm]
For all $t \in \mathbb{N}$, and $k=1, \ldots ,d$,
\[
\lambda_{n,N,t+1}[k] = \delta_{n}[k] \frac{\sum_{i=1}^{N} h \left( \delta_{n}, \lambda_{n,N,t},U_{i} \right) }{\sum_{i=1}^{N} \left( U_{i}[k] \right)^{2} h \left( \delta_{n}, \lambda_{n,N,t},U_{i} \right)}
\]
where for all $x= \left( x_{1} , \ldots , x_{d} \right)^{T} \in \mathbb{R}^{d}$, $x[k] = x_{k}$. 
\end{algorithm}

\noindent Remark that this method does not require to calibrate any hyperparameter, as opposed to the possibly more efficient following gradient algorithm.

\begin{algorithm}[Gradient algorithm]
For all $t \in \mathbb{N}$,
\[
\lambda_{n,N,t+1} = \lambda_{n,N,t} - \eta_{t}\sum_{k=1}^{n} \lambda_{n,N,t}\left(  U_{i}^{2}  h\left( \delta_{n}, \lambda_{n,N,t},U_{i} \right) -  \delta_n h   \left( \delta_{n}, \lambda_{n,N,t},U_{i} \right) \right)
\]
where $\eta_{t}$ is non-decreasing positive step sequence.
\end{algorithm}

\noindent One may refer to the classical literature to calibrate the step sequence. 
Finally, one may resort to the recursive Robbins-Monro  algorithm \cite{robbins1951} or, more specifically, to its weighted averaged version \cite{mokkadem2011generalization}.
\begin{algorithm}[Robbins-Monro]
For all $k \leq N-1$, one has
\begin{align*}
\lambda_{n,N,k+1} & = \lambda_{n,N,k} - \gamma_{k+1}  \left( \lambda_{n,N,k} U_{k+1}^{2}  h\left( \delta_{n}, \lambda_{n,N,k},U_{k+1} \right) -  \delta_{n} h   \left( \delta_{n}, \lambda_{n,N,k},U_{k+1} \right) \right), \\
\overline{\lambda}_{n,N,k+1} & = \overline{\lambda}_{n,N,k} + \frac{\log (k+1)^{w}}{\sum_{l=0}^{k}\log(l+1)^{w}}\left( \lambda_{n,N,k+1} - \overline{\lambda}_{n,N,k} \right),
\end{align*}
with $\overline{\lambda}_{n,N,0}=\lambda_{n,N,0}$, $\gamma_{k} = c_{\gamma}k^{-\gamma}$ with $c_{\gamma}> 0$ and $\gamma \in (1/2,1)$, $\omega \geq 0$.
\end{algorithm}

\noindent 
{
The term 'weighted averaging' comes from the update formula $\overline{\lambda}_{n,N,k} = \sum_{l=0}^{k}\log(l+1)^{\omega}  {\lambda}_{n,N,l} \left/ {\sum_{l=0}^{k}\log(l+1)^{\omega}} \right.$. 
}
Note that the case  $w = 0$ corresponds to the usual averaged algorithm \cite{ruppert1988efficient,PolyakJud92}. 

Remark that, under suitable assumptions, these three methods have the same asymptotic (on $N$) behavior. Nevertheless, the two first ones necessitate $O\left(Nd^{2}T \right)$ operations, where $T$ is the number of iterations, while the Weighted Averaged Robbins Monro algorithm  {only requires} $O \left( Nd^{2} \right)$ operations.  {Hence, using $TN$ copies instead of $N$, one expect a better precision with the last method, for the same computational time.}
A comparative study of these algorithms  {is presented} in Section \ref{sec:simVariance}.

\section{Robust Mixture Model} \label{sec:RMM}
\subsection{Mixture model}
We now consider a random variable $X$ following a mixture with $K$ classes, i.e
\begin{equation}
\label{def:RMM} X \sim \sum_{k=1}^K \pi_k^{*} Y_{k},
\end{equation}
that is $Z \sim \Mcal(1, \pi^{*})$ and $(X \mid Z=k) \sim Y_{k}$, where the vector $\pi^{*} = \left( \pi_{1}^{*} , \ldots , \pi_{K}^{*} \right)$ belongs to the simplex $ {\mathcal{S}}^{K}:= \left\{ \pi, \pi_{k} > 0, \sum_{k=1}^{K} \pi_{k} = 1 \right\}$.

Furthermore, we suppose from  now that $Y_{k}$ satisfies the following conditions:
\begin{enumerate}[($a$)]
\item the distribution of $Y_{k}$ is symmetric;
\item $Y_{k}$ admits a second order moment, and we denote by $\mu_{k}^{*}$ and $\Sigma_{k}^{*}$ its mean and variance;
\item the variance of $Y_{k}$ is positive;
\item the random variable $Y_{k}$ is absolutely continuous with density $\phi_{\mu_{k}^{*},\Sigma_{k}^{*}}(.)$ determined by $\mu_{k}^{*}, \Sigma_{k}^{*}$ and known parameters. 
\end{enumerate}

 Observe that these conditions are satisfied by Gaussian mixtures, multivariate Student mixtures (when all the classes have the same known degree of freedom) or multivariate Laplace mixtures (to name a few).  {Conditions ($a$), ($b$) and ($c$)} enable to  {estimate} the mean and the variance  {in a robust manner} with the methods proposed in previous section, while  {Condition ($d$)} just ensures that the density only depends on known parameters or on parameters that can be estimated robustly. Of course, one can adapt this work for more specific cases such as Student mixtures with unknown degrees of freedom. In what follows, we will denote $\mu^{*} = \left( \mu_{1}^{*} , \ldots , \mu_{K}^{*} \right)$, $\Sigma^{*} = \left( \Sigma_{1}^{*} , \ldots , \Sigma_{K}^{*} \right)$ and $\theta^{*} = (\pi^{*} , \mu^{*} , \Sigma^{*})$. 

The popular EM algorithm (\cite{DLR77}) aims at providing the maximum likelihood estimates by minimizing the empirical risk
\[
R_n \left(\pi, \mu , \Sigma \right) - \frac1n \sum_{i=1}^n \sum_{k=1}^{K} \tau_{k}(X_i)  \left(\log \left( \pi_{k} \right) + \log \left( \phi_{\mu_{k}, \Sigma_{k} }\left( X_i \right) \right) \right),
\]
the theoretical counterpart of which is
\[
R \left(\pi, \mu , \Sigma \right) = - \mathbb{E}_{\theta^*}\left[ \sum_{k=1}^{K} \tau_{k}(X)  \left(\log \left( \pi_{k} \right) + \log \left( \phi_{\mu_{k}, \Sigma_{k} }\left( X \right) \right) \right) \right] ,
\]
where 
$
\tau_k(X) 
= \Pr_{\theta^{*} }\left[ Z = k \mid X \right]
{
= {\pi_k^{*} \phi_{\mu_{k}^{*} , \Sigma_{k}^{*}}(X)} \left/ {\sum_{\ell = 1}^K \pi_\ell^{*} \phi_{\mu_\ell^{*},\Sigma_{\ell}^{*}}(X)} \right.
}
$
.

 {Importantly, we now that}
\[\pi^{*} \in \arg\min_{\pi \in  \mathcal{S}^{K}} - \Esp_{\theta^{*}}\left[\sum_{k=1}^K {\tau}_k(X) \log \pi_k\right]
\]
while 
$$
\mu^{*} = \arg\min_{\mu} \mathbb{E}_{\theta^{*}} \left[ \sum_{k=1}^{K} \tau_{k}(X) \left\| X - \mu_k \right\|^{2} \right],  
\qquad 
\Sigma^{*} = \arg\min_{\Sigma} \mathbb{E}_{\theta^{*}}\left[ \sum_{k=1}^{K} \tau_{k}(X) \left\| \left( X - \mu^{*} \right) \left( X - \mu^{*} \right)^{T} - \Sigma_k \right\|_{F}^{2} \right] , 
$$
where $\| .\|_{F}$ is the Frobenius norm for matrices. 


\subsection{Loss}
Consider a mixture model as defined in \eqref{def:RMM} with parameter $\theta^{*} = (\pi^{*} , \mu^{*} , \Sigma^{*})$ and let us denote by $m^{*} = \left( m_{1}^{*} , \ldots , m_{K}^{*} \right)$ and $V^{*} = \left( V_{1}^{*}, \ldots , V_{K}^{*} \right)$ the medians and MCM of the classes. Intuitively, the idea is to replace, in the usual EM algorithm, the estimates of the mean $\mu_{k}$ and the variance $\Sigma_{k}$ of each  class by their robust version. More precisely, the aim is to replace them by the median $m_{k}^{*}$ and the transformation of the MCM $\Psi_{U}\left( V_{k}^{*} \right)$ of each class.
 {Still}, as we cannot know the class of the data,  {we need to} give an alternative definition of the median and MCM of the classes.   To do so, let us introduce the two following functions:
\begin{align*}
   & G_2(m)  = \Esp_{\theta^{*} }\left[\sum_{k=1}^K \tau_k(X) \left\|X - m_k\right\|\right] 
   & G_3(m, V) 
    = \Esp_{\theta^{*} }\left[\sum_{k=1}^K  \tau_k(X) \left\|(X - m_k)(X - m_k)^\intercal - V_k\right\|_{F}\right]. 
\end{align*}

The following proposition ensures that the minimizers of these functions correspond to $m^{*}$ and $V^{*}$, which will be  {central in the construction of} robust estimates of $\theta^{*}$.
\begin{Prop}\label{prop1}
  Consider a mixture model as defined in \eqref{def:RMM} and parametrized with $\theta^* = (\pi^*, \mu^*, \Sigma^*)$. Then
  $$    m^* = \arg\min_m \Esp_{\theta^*}  \left[ G_2( m)\right], \quad \quad \text{and} \quad \quad
  V^* = \arg\min_V \Esp_{\theta^*} \left[G_3( m^*, V)\right].
  $$
  Furthermore, $m^{*} = \mu^{*}$, $\Psi_{U} \left( V^{*} \right) := \left( \Psi_{U} \left( V_{1}^{*} \right) , \ldots , \Psi_{U} \left( V_{K}^{*} \right) \right) = \Sigma^{*}$,
  $$
  \tau_{k}(X) = {\pi_k^{*} \phi_{m_{k}^{*} , \Psi_{U}\left( V_{k}^{*} \right)}(X)}\left/ {\sum_{\ell = 1}^K \pi_\ell^{*} \phi_{m_\ell^{*},\Psi_{U}\left( V_{\ell}^{*} \right)}(X)} \right.,
  $$
  and 
  \[
  R_{\pi^{*}} \left(   m^{*} , \Psi\left( {V}^{*} \right) \right) = \min_{\mu ,\Sigma} R_{\pi^{*}}\left( \mu , \Sigma \right) =  R_{\pi^{*}}\left( \mu^{*} , \Sigma^{*} \right). 
  \]
\end{Prop}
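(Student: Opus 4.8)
The plan is to reduce the $\tau_k$-weighted population quantities to per-class quantities by a change of measure, and then to invoke the characterizations of the geometric median and the MCM class by class. First I would establish the tilting identity: since $\tau_k(X) = \Pr_{\theta^*}[Z = k \mid X]$, for every integrable $f$ one has $\Esp_{\theta^*}[\tau_k(X) f(X)] = \Esp_{\theta^*}[\mathbf{1}\{Z = k\} f(X)] = \pi_k^* \, \Esp[f(Y_k)]$. Applying this with $f(\cdot) = \|\cdot - m_k\|$ and with $f(\cdot) = \|(\cdot - m_k)(\cdot - m_k)^\intercal - V_k\|_F$ rewrites $G_2(m) = \sum_k \pi_k^* \Esp[\|Y_k - m_k\|]$ and $G_3(m,V) = \sum_k \pi_k^* \Esp[\|(Y_k - m_k)(Y_k - m_k)^\intercal - V_k\|_F]$, so the minimization over $m$ (resp. over $V$) decouples into $K$ separate problems in which the positive factor $\pi_k^*$ plays no role. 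Each of these sub-problems coincides, up to an additive constant that is finite by the second-moment condition $(b)$, with the variational definition of the geometric median of $Y_k$ — respectively, once $m = m^*$ has been substituted, with that of the MCM of $Y_k$; uniqueness of the minimizers follows from the non-degeneracy of $Y_k$ ensured by $(c)$–$(d)$, under which the geometric median \cite{Kem87} and the MCM \cite{KrausPanaretos2012,CG2015} are uniquely defined. This yields the first two identities, with $m_k^*$ and $V_k^*$ the geometric median and the MCM of $Y_k$.

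Next I would identify these robust quantities with the classical ones. By condition $(a)$, $Y_k - \mu_k^*$ is symmetric about the origin, so $v \mapsto \Esp[\|Y_k - v\|]$ is convex and invariant under reflection about $\mu_k^*$; a convex function symmetric about a point attains its minimum there, hence $m_k^* = \mu_k^*$ and $m^* = \mu^*$. For the covariances, conditions $(a)$–$(d)$ place each class exactly in the setting of Section~\ref{sec:variance}: the standardized vector $U_k = (\Sigma_k^*)^{-1/2}(Y_k - \mu_k^*)$ follows the prescribed known law, $V_k^*$ and $\Sigma_k^*$ have the same eigenvectors, and their eigenvalues are related by \eqref{link:delta:lambda}; this is precisely the content of the map $\Psi_U$ of \eqref{def:Psi}, so $\Psi_U(V_k^*) = \Sigma_k^*$ for every $k$, i.e. $\Psi_U(V^*) = \Sigma^*$. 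The announced formula for $\tau_k(X)$ is then obtained by substituting $\mu_k^* = m_k^*$ and $\Sigma_k^* = \Psi_U(V_k^*)$ into the definition $\tau_k(X) = \pi_k^* \phi_{\mu_k^*,\Sigma_k^*}(X) / \sum_\ell \pi_\ell^* \phi_{\mu_\ell^*,\Sigma_\ell^*}(X)$.

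It remains to prove the risk identity. The equality $R_{\pi^*}(m^*, \Psi_U(V^*)) = R_{\pi^*}(\mu^*, \Sigma^*)$ is immediate from the previous step, so it suffices to show that $(\mu^*, \Sigma^*)$ minimizes $(\mu, \Sigma) \mapsto R_{\pi^*}(\mu, \Sigma)$. Using the tilting identity once more, $R_{\pi^*}(\mu, \Sigma) = -\sum_k \pi_k^* \bigl( \log \pi_k^* + \Esp[\log \phi_{\mu_k,\Sigma_k}(Y_k)] \bigr)$; since $Y_k$ has density $\phi_{\mu_k^*,\Sigma_k^*}$, each term $\Esp[\log \phi_{\mu_k,\Sigma_k}(Y_k)]$ differs from its value at $(\mu_k^*,\Sigma_k^*)$ by $-\mathrm{KL}\bigl(\phi_{\mu_k^*,\Sigma_k^*} \,\|\, \phi_{\mu_k,\Sigma_k}\bigr) \le 0$, hence is maximized there; summing with the positive weights $\pi_k^*$ gives $R_{\pi^*}(\mu^*, \Sigma^*) = \min_{\mu,\Sigma} R_{\pi^*}(\mu,\Sigma)$ and completes the argument.

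I expect the main obstacle to be not a single hard estimate but the well-posedness bookkeeping: one has to check that the $\tau_k$-weighting preserves existence and uniqueness of the geometric-median and MCM minimizers, i.e. that conditions $(a)$–$(d)$ on the mixture components are strong enough to enter the non-degeneracy hypotheses under which those quantities are uniquely defined (and, for the last identity, that the relevant log-densities are integrable so that the Kullback–Leibler divergences are well-defined). The remaining ingredients — the symmetry argument for the median, the eigen-structure link quoted from \cite{KrausPanaretos2012}, and the nonnegativity of the Kullback–Leibler divergence — are classical.
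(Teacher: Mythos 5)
Your proposal is correct and follows essentially the same route as the paper: the tower-property identity $\Esp_{\theta^*}[\tau_k(X)f(X)]=\Esp_{\theta^*}[\mathbf{1}_{Z=k}f(X)]$ reduces $G_2$ and $G_3$ to per-class median/MCM problems, and symmetry of $Y_k$ identifies $m_k^*$ with $\mu_k^*$ and $\Psi_U(V_k^*)$ with $\Sigma_k^*$ via the cited eigenstructure result (the paper argues via the vanishing gradient where you argue via convexity plus reflection invariance, which is equivalent). Your Kullback--Leibler argument for the final risk identity is a welcome addition, as the paper's proof leaves that part implicit.
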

In other words, we propose here a new parametrization of the problem where the new parameters correspond to robust indicators.
The proof is given in Section \ref{sec:proofs}.

\subsection{Fix-point property} 
The following proposition enables to see $\left( \pi^{*},m^{*},V^{*} \right)$ as a fixpoint of a function $g^{*}$.
\begin{Prop}\label{prop:fixpoint}
    Consider a mixture model as defined in \eqref{def:RMM} and parametrized with $\theta^* = (\pi^*, \mu^*, \Sigma^*)$. Then, $\left( \pi^{*},m^{*},V^{*} \right)$ (with $\pi^{*},m^{*},V^{*}$ defined in Proposition \ref{prop1}) satisfy
\[
\left( \pi^{*},m^{*},V^{*} \right) = g^{*} \left( \pi^{*},m^{*},V^{*} \right)    
\]
where $g^{*}(\pi , m , V) = \left( g_{1}^{*}(\pi),g_{2,1}^{*}\left(m_{1}\right)),\ldots ,g_{2,K}^{*}\left(m_{K}\right) ,g_{3,1}^{*}\left(V_{1}\right),g_{3,K}^{*}\left(V_{K}\right) \right)$ with $g_{1}(\pi) = \left( g_{1,1} \left( \pi \right) , \ldots , g_{1,K} \left( \pi \right) \right) $ and
{
\begin{align*}
  g_{1,k}(\pi) & := \mathbb{E}\left[ \frac{ \pi_{k} \phi \left( X , m_{k}^{*} , \Psi_{U} \left( V_{k}^{*} \right) \right) }{\sum_{i=1}^{K}\pi_{i}  \phi \left( X , m_{i}^{*} , \Psi_{U} \left( V_{i}^{*} \right)  \right)} \right] \qquad
  g_{2,k}\left( m_{k} \right) := \frac{\mathbb{E}\left[  \frac{\tau_{k}(X)X}{\left\| X - m_{k}  \right\|} \right]   }{ \mathbb{E}\left[  \frac{\tau_{k}(X)}{\left\| X - m_{k}  \right\|} \right]}   \\
  g_{3,k} \left( V_{k}  \right) & := {\mathbb{E}\left[  \frac{\tau_{k}(X) \left( X - m_{k}^{*} \right) \left( X - m_{k}^{*} \right)^{T} }{\left\| \left( X - m_{k}^{*} \right) \left( X - m_{k}^{*} \right)^{T} - V_{k}  \right\|_{F}} \right]   } \left( { \mathbb{E}\left[  \frac{\tau_{k}(X)  }{\left\| \left( X - m_{k}^{*} \right) \left( X - m_{k}^{*} \right)^{T} - V_{k}  \right\|_{F}} \right]} \right)^{-1}  
\end{align*}
}
and  {$\tau_{k}(X) = {\pi_k^{*} \phi_{m_{k}^{*} , \Psi_{U}\left( V_{k}^{*} \right)}(X)} \left/ {\sum_{\ell = 1}^K \pi_\ell^{*} \phi_{m_\ell^{*},\Psi_{U}\left( V_{\ell}^{*} \right)}(X)} \right.$}.
\end{Prop}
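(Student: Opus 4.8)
The plan is to verify separately the three blocks of $g^{*}$, using Proposition \ref{prop1} at each step to identify $m^{*}$ with $\mu^{*}$ and $\Psi_{U}(V^{*})$ with $\Sigma^{*}$, so that the weights $\tau_{k}$ entering the definition of $g^{*}$ are exactly the posterior probabilities $\Pr_{\theta^{*}}[Z=k\mid X]$. For the mixing proportions there is nothing to optimize: $g_{1,k}^{*}$ is an explicit expectation whose integrand, by Proposition \ref{prop1}, equals $\tau_{k}(X)$; hence by the tower property $g_{1,k}^{*}(\pi^{*})=\Esp_{\theta^{*}}\bigl[\Pr_{\theta^{*}}[Z=k\mid X]\bigr]=\Pr_{\theta^{*}}[Z=k]=\pi_{k}^{*}$, which is the first coordinate of the claimed identity.

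For the medians and the median covariation matrices I would start from the variational characterizations already supplied by Proposition \ref{prop1}: $m^{*}$ minimizes $m\mapsto G_{2}(m)=\sum_{k}\Esp_{\theta^{*}}[\tau_{k}(X)\|X-m_{k}\|]$ and $V^{*}$ minimizes $V\mapsto G_{3}(m^{*},V)=\sum_{k}\Esp_{\theta^{*}}[\tau_{k}(X)\|(X-m_{k}^{*})(X-m_{k}^{*})^{\intercal}-V_{k}\|_{F}]$. Both functionals are finite-valued (assumption ($b$) provides the required integrability and square integrability), separable in the blocks $m_{1},\dots,m_{K}$, resp.\ $V_{1},\dots,V_{K}$, and convex, strictly so by assumption ($d$); hence each minimizing block is characterized by the vanishing of the corresponding gradient. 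Differentiating blockwise gives
\[
-\,\Esp_{\theta^{*}}\!\left[\tau_{k}(X)\,\frac{X-m_{k}^{*}}{\left\|X-m_{k}^{*}\right\|}\right]=0
\qquad\text{and}\qquad
-\,\Esp_{\theta^{*}}\!\left[\tau_{k}(X)\,\frac{(X-m_{k}^{*})(X-m_{k}^{*})^{\intercal}-V_{k}^{*}}{\left\|(X-m_{k}^{*})(X-m_{k}^{*})^{\intercal}-V_{k}^{*}\right\|_{F}}\right]=0,
\]
and isolating $m_{k}^{*}$, resp.\ $V_{k}^{*}$, in these two identities reproduces exactly $m_{k}^{*}=g_{2,k}^{*}(m_{k}^{*})$ and $V_{k}^{*}=g_{3,k}^{*}(V_{k}^{*})$. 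Assembling the three blocks then yields $(\pi^{*},m^{*},V^{*})=g^{*}(\pi^{*},m^{*},V^{*})$.

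The only step that is not purely formal is the passage from ``$m^{*}$ (resp.\ $V^{*}$) minimizes $G_{2}$ (resp.\ $G_{3}$)'' to the stationarity identities, since $\|\cdot\|$ and $\|\cdot\|_{F}$ are not differentiable at the origin. The hard part is therefore justifying the interchange of differentiation and expectation: I would invoke assumption ($d$), under which $\Pr_{\theta^{*}}[X=m_{k}^{*}]=0$ and $\Pr_{\theta^{*}}[(X-m_{k}^{*})(X-m_{k}^{*})^{\intercal}=V_{k}^{*}]=0$, so that the non-differentiability is confined to a $\Pr_{\theta^{*}}$-null set; on its complement the integrand is differentiable with gradient a unit vector (resp.\ a unit-Frobenius-norm matrix) scaled by $\tau_{k}(X)\le 1$, a uniform envelope valid on a neighbourhood of the minimizer, so dominated convergence legitimizes differentiating under $\Esp_{\theta^{*}}$. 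A minor subsidiary point, when rewriting the stationarity identities as the ratios defining $g_{2,k}^{*}$ and $g_{3,k}^{*}$, is that the normalizing denominators $\Esp_{\theta^{*}}[\tau_{k}(X)/\|X-m_{k}^{*}\|]$ and its matrix analogue are finite and nonzero; nonvanishing is immediate from $\tau_{k}>0$ on a set of positive measure, and finiteness follows from the regularity of the emission densities $\phi_{\mu_{k}^{*},\Sigma_{k}^{*}}$ (which holds for the Gaussian, Student and Laplace cases cited after \eqref{def:RMM}).
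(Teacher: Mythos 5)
Your proof is correct and follows essentially the same route as the paper's: verify $\pi_k^*=\Esp[\tau_k(X)]$ for the first block, then obtain the fixed-point identities for $m_k^*$ and $V_k^*$ by setting the blockwise gradients of $G_2$ and $G_3$ to zero and rearranging into the ratio form of $g_{2,k}$ and $g_{3,k}$. The only difference is that you spell out the regularity details (differentiation under the expectation off a null set, finiteness and nonvanishing of the normalizing denominators) that the paper leaves implicit.
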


\section{Algorithm} \label{sec:algo}
\subsection{The algorithm}

 {We} consider that $X$ follows the mixture model defined by \eqref{def:RMM} and consider $X_{1} , \ldots ,X_{n}$ i.i.d copies of $X$.  We now consider the "empirical fixpoint function", i.e we will consider, denoting $\tau = \left( \tau_{1} , \ldots , \tau_{k} \right)$, and $\tau_{k} = \left( \tau_{1,k} , \ldots , \tau_{n,k} \right)$,
{
\begin{align*}
\hat{g}_{2,k} \left( \tau_{k} ,  m_{k} \right)
& = \left({\sum_{i=1}^{n} \frac{\tau_{i,k} X_{i}}{\left\| X_{i} - m_{k} \right\|} } \right) \left/ \left({\sum_{i=1}^{n} \frac{\tau_{i,k}}{\left\| X_{i} - m_{k} \right\|}} \right) \right. \\
\hat{g}_{3,k} \left( \tau_{k} ,  m_{k} , V_{k} \right)
& = \left( {\sum_{i=1}^{n}\frac{ \tau_{i,,k} \left( X_{i} - m_{k} \right) \left( X_{i} - m_{k} \right)^{T}}{\left\| \left( X_{i} - m_{k} \right) \left( X_{i} - m_{k} \right)^{T} - V_{k} \right\|_{F}}} \right) \left/ \left({\sum_{i=1}^{n} \frac{\tau_{i,k}}{\left\| \left( X_{i} - m_{k} \right) \left( X_{i} - m_{k} \right)^{T} - V_{k} \right\|_{F}}} \right) \right..
\end{align*}
}
This leads to the following algorithm:

\begin{algorithm}[Fix Point algorithm for Robust Mixture Model]
  Starting from $\phi^0 = (\pi^0, m^0, V^0)$, repeat until convergence:
  \begin{enumerate}
  \item Compute for each $1 \leq i \leq n$ and $1 \leq k \leq K$
  $$
  \tau_k^{h+1}(X_i) = \frac{\pi_{k}^{h}\phi_{m_{k}^{h}, \hat{\Psi}_{u} \left( V_{k}^{h} \right)} \left( X_{i} \right)}{\sum_{\ell = 1}^{K} \pi_{\ell}^{h}\phi_{m_{\ell}^{h}, \hat{\Psi}_{u} \left( V_{\ell}^{h} \right)} \left( X_{i} \right)} ,
  $$
  where $\hat{\Psi}_{U}$ is the application which enables, given $V_{k}$, to "rebuild" $\Sigma_{k}$ with the help of one of the method proposed in Section \ref{sec:variance}; 
  \item Based on the fix point relations (see Proposition \eqref{prop:fixpoint}), update, for each $1 \leq k \leq K$,
  $$
  \pi^{h+1}_k = \frac1n \sum_{i=1}^n \tau_k^{h+1}(X_i), \qquad
  m^{h+1}_k = \text{FixPoint}\left( \widehat{g}_{2k}(\tau_{k}^{h},.) \right), \qquad
  V^{h+1}_k =\text{FixPoint}\left( \widehat{g}_{3k}(\tau_{k}^{h},m_{k}^{h},.)\right).
  $$
  where $\text{FixPoint}\left(f(.) \right)$ denotes the fix point of the functional $f$.
  \end{enumerate}
\end{algorithm}
Note that estimating the fix points leads to estimate the weighted median and MCM considering weights $\tau_{k}^{h}$. More intuitively, this algorithm consists in updating $\tau_{i,k}$ replacing the empirical mean and variance of each class by their robust estimates based on the median and the MCM of each class, before updating $\pi$ (as usually).

\subsection{Choosing the number of clusters}

To determine the number of clusters $K$, we resort to two standard penalized-likelihood criteria, namely BIC (\cite{Sch78}) and ICL (\cite{BCG00,MaP00}).
More specifically, denoting by $D_K$ the number of independent parameters involved in a mixture with $K$ clusters and by $\widehat{\mathcal{L}}_K(X)$ the log-likelihood of the dataset $X$ evaluated with the parameter estimates resulting from the proposed estimation procedure:
$$
\widehat{\mathcal{L}}_K(X) = \sum_{i=1}^n \log\left(\sum_{k=1}^K \widehat{\pi}_k \phi_{\widehat{\mu}_k, \widehat{\Sigma}_k}(X_i)\right), 
$$
we used
\begin{equation} \label{eq:modelSel}
  BIC(K) = \widehat{\mathcal{L}}_K(X) - \log(n) D_K/2, \qquad 
  ICL(K) = BIC(K) + \sum_{i=1}^n \sum_{k=1}^K \widehat{\tau}_{i, k} \log \widehat{\tau}_{i, k}.
\end{equation}
We remind that the additional penalty term in the ICL criterion corresponds to the entropy of the conditional distribution of the latent variables $\{Z_i\}_{1 \leq i \leq n}$, conditional on the observed ones $\{X_i\}_{1 \leq i \leq n}$.  {This additional penalty is supposed to favor clusterings with lower classification uncertainty.}

\subsection{Initialization of the algorithm}
 {
We considered two ways of initializing the algorithm:
\begin{enumerate}
\item[•] Use the robust hierarchical clustering proposed by \cite{gagolewski2016genie}, to get $\tau^{1}$, and run our algorithm from there ;
\item[•] Randomly choose $K$ centers from the data and take $\Sigma_{k} = I_{d}$ and $\pi_{k} = \frac{1}{K}$ for all $k$. 
\end{enumerate}
Remark that the later way can tried several times, so to keep initialization leading to the best final log-likelihood.
We may also use the two ways and keep the best result in term of log-likelihood. 
}

\section{Simulations}  \label{sec:simul}
We designed a series of simulation studies to assess the efficiency and the accuracy of the proposed methodology.
 {The proposed methods are all} available in the R package \texttt{RGMM} available on CRAN\footnote{\url{cran.r-project.org/package=RGMM}}.

\subsection{Simulation design} \label{sec:simDesign}

\paragraph{Simulation parameters.}
We considered random vectors $X$ with dimension $p=5$ and mixture models with $K = 3$ clusters with equal proportions. We defined the three mean vectors $\mu_1$, $\mu_2$ and $\mu_3$, each with their $p$ coordinates all equal to $0$, $3$ and $-3$, respectively (see Equation \eqref{eq:mu} in Appendix \ref{app:simDesign}). We defined the three covariance matrices $\Sigma_1$, $\Sigma_2$ and $\Sigma_3$ given in Equation \eqref{eq:Sigma} in Appendix \ref{app:simDesign}.  {To give different shape to the three distributions}, $\Sigma_1$ has a constant diagonal term (equal to $2$), $\Sigma_2$ has diagonal terms increasing from $1$ to $p$ and $\Sigma_3$ has diagonal terms decreasing from $1$ to $1/p$. The two considered mixture distributions were therefore 
\begin{align*}
  \text{Gaussian: } & K^{-1} \sum_{k=1}^K \mathcal{N}_p\left(\cdot; \mu_k^*, \Sigma_k^*\right), &
  \text{Student: } & K^{-1} \sum_{k=1}^K \mathcal{T}_p\left(\cdot; \mu_k^*, S_k^* = \nu^{-1} (\nu-2) \Sigma_k^*, \nu\right).
\end{align*}
For Student distributions, the scale matrix $S_k$ was adapted so that the variance  {in  class $k$} was   $\Sigma_k$ and the number of degrees of freedom of each cluster was set to $\nu = 3$. 
The simulations dedicated to variance estimation were carried with a null mean vector and the covariance matrix $\Sigma_0$ given in Equation \eqref{eq:Sigma} in Appendix \ref{app:simDesign}.

\paragraph{Contamination scenarios.}
A contamination rate $\delta$ ranging from 0 (no contamination) to $50\%$ was applied to each cluster. Namely, a same fraction $\delta$ of the observations of each cluster $k = 1, \dots K$ was drawn with one of the five following contaminating distributions (hereafter referred to as 'scenarios')
\begin{enumerate}[($a$)]
  \item uniform distribution over the hypercube: $\mathcal{U}\{[-20, 20]^p\}$;
  \item Student distribution with null location vector, identity scale matrix and degree of freedom 1: $\mathcal{T}(0_p, I_p, 1)$; 
  \item Student distribution with location vector $\mu_k^*$, identity scale matrix and degree of freedom 1: $\mathcal{T}(\mu_k^*, I_p, 1)$; 
  \item Student distribution with null location vector, identity scale matrix and degree of freedom 2: $\mathcal{T}(0_p, I_p, 2)$; 
  \item Student distribution with location vector $\mu_k^*$, identity scale matrix and degree of freedom 2: $\mathcal{T}(\mu_k^*, I_p, 2)$. 
\end{enumerate}
Observe that, when considering one single cluster with null location parameter, scenarios ($c$) and ($e$) are equivalent to scenarios ($b$) and ($d$), respectively.
The contaminating distribution has no first two moments under scenarios ($b$) and ($c$), and no variance under scenarios ($d$) and ($e$). Under scenarios ($c$) and ($e$), the contaminating distribution has the same center as the corresponding cluster so the outliers can be considered as belonging to the cluster, whereas outliers arising from different clusters can not be distinguished under scenarios ($a$), ($b$) and ($d$).

\paragraph{Maximum likelihood estimates.}
For both Gaussian and Student mixtures models, we compared our results with the maximum likelihood estimates (MLE) provided by the R packages {\tt mclust} \cite{SFM16} for Gaussian mixtures and by the {\tt teigen} \cite{AWB18} for the Student mixtures. In the sequel, the corresponding algorithms and results will be referred to as GMM and TMM, respectively. The robust counterparts we propose will be referred to as RGMM and RTMM. For all the fours methods we carried the inference either with a fixed number of clusters $K$, or letting a model selection criterion (see below) choose an optimal number of clusters $\widehat{K}$.

\paragraph{Evaluation criteria.}
For each simulated dataset, we run the four algorithms (with fixed or selected $K$) and obtained estimates of the parameters $\mu_k$ and $\Sigma_k$, as well as a classification of each observation.
\begin{description}
 \item[Classification:] we used the adjusted Rand index (ARI) to compare the estimated classification with the simulated one. 
 \item[Parameter estimates:] when considering the true number of cluster $K$, we computed 
 \begin{itemize}
 \item the mean squared error for the center: $MSE(\mu) = K^{-1} \sum_k \|\mu_k^* - \widehat{\mu}_k\|^2/p$,
 \item {the mean squared error for the covariance: $MSE(\Sigma) = K^{-1} \sum_k \|\Sigma_k^* - \widehat{\Sigma}_k\|^2/p^2$.}
 \end{itemize}
\item[Model selection:] when considering the case of unknown number of cluster, we considered both the BIC and the ICL  criteria given in Equation \eqref{eq:modelSel}. 
\end{description}

\subsection{Variance and median estimation} \label{sec:simVariance}

The first part our simulations focuses on the robust estimation of the first two moments in one single cluster (no mixture). 

\subsubsection{Gaussian case}

\paragraph{No outlier.}
In this section,  we first consider the estimation of the variance and median in absence of outliers.  {To this aim, we consider} $X \sim \mathcal{N}(0,\Sigma)$, with $\Sigma = \Sigma_0$, as given in Equation \eqref{eq:Sigma}, Appendix \ref{app:simDesign}. We first focus on the accuracy of each method to estimate the variance. To do so, we consider $n= 10^5$ i.i.d copies of $X$ and estimate the MCM with the help of the Weiszfeld's algorithm. 

In Figure \ref{fig:estim_robust_cov}, we show the evolution of the quadratic mean error of the estimates with respect to the sample size. More precisely, we compared the estimates obtained with fix point algorithm, with $10$, $20$ and $50$ iterations, with the iterative gradient algorithm with  $10$, $20$ and $50$ iterations  and the averaged Robbins-Monro estimates (Robbins-Monro). \\
We also compared the behavior of the methods but with  {fixed} computation {budget}. More precisely, if a sample of size $N$ has been generated for the Monte-Carlo method for an iterative method with $T=50$, a sample of size $5 N$ is generated for an iterative method with $T=10$ iterations, and a sample of size $50N$ will be generated for the Robbins-Monro method.  {The results are} based on $\SR{}{B=}50$  {replicates}. 

We observe that all methods achieve convergence and have similar behaviors when they use samples with same sizes. Nevertheless, for  {fixed} computation  {budget}, the method based on the Robbins-Monro algorithm seams (without surprise) to lead to better results.

\begin{figure}[H]
\centering
    \includegraphics[scale=0.5]{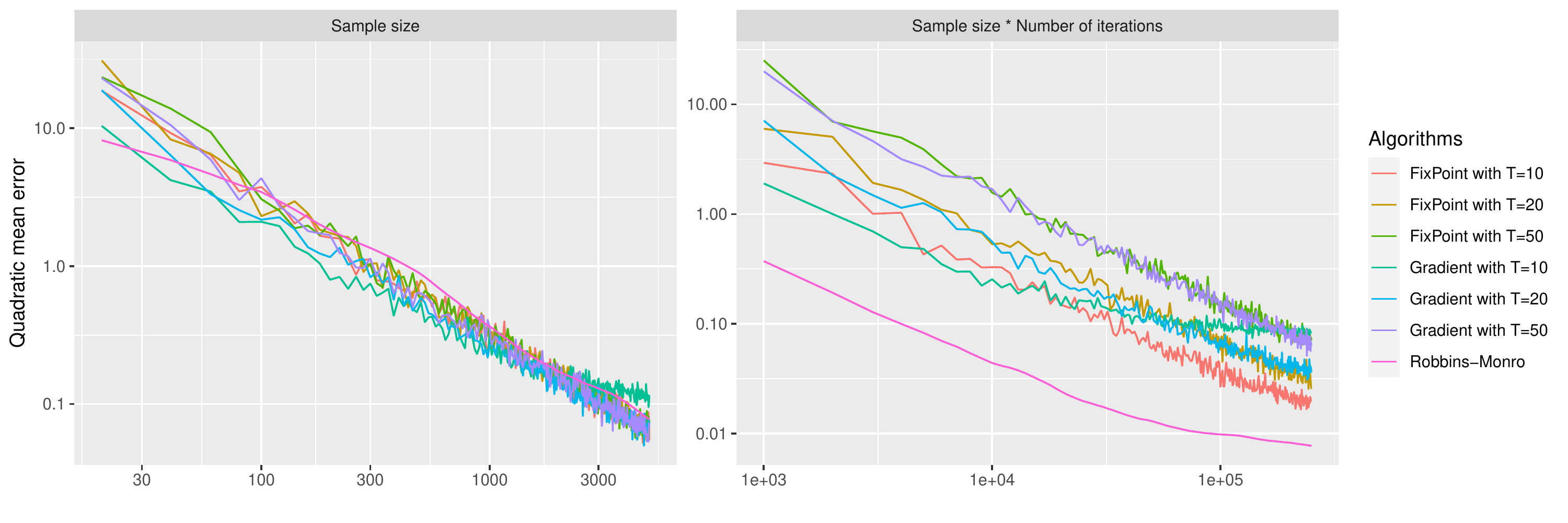}    
    \caption{Evolution of the quadratic mean error of the different methods with respect to the sample size (on the left) and to computation time (on the right).}
    \label{fig:estim_robust_cov}
\end{figure}

\paragraph{With outliers.}
We then introduced an increasing fraction $\delta$ of outliers according to scenarios ($a$), ($b$) and ($d$) described in Section \ref{sec:simDesign}. 
We considered samples with size $n = 5000$, and estimated the MCM with the help of the Weiszfeld algorithm (indicated by (W)) or with the ASGD (indicated by (R)). We estimated the  eigenvalues of the variance with the three proposed methods and with a sample size of $N = 2000$ for the Monte-Carlo method before building the variance. For iterative methods, we used $T=50$ iterations. 

 {
All robust methods provide accurate estimates of the variance, even in presence of a large fraction of outliers.}
In addition, one can see that even if Robbins-Monro method  {is slightly less precise than} the other robust alternatives, but performs well any way. 
 {Yet, as the} Robbins-Monro procedure is less expansive in term of {computation} time, and {because it turns out to be more accurate than the other methods with a same computational budget} (see Appendix \ref{app::variance}), this procedure will be preferred for robust mixture models.

\begin{table}[H]
\centering
\begin{tabular}[b]{cc|rrrrrrr}
& \rotatebox[origin=r]{360}{$\delta$ ($\%$)}  & \rotatebox{270}{FixPoint (R)}    & \rotatebox{270}{FixPoint (W)}    & \rotatebox{270}{Gradient (R)}    & \rotatebox{270}{Gradient (W)}    & \rotatebox{270}{Robbins (R)}    & \rotatebox{270}{Robbins (W)}    & \rotatebox[origin=l]{270}{Variance}       \\  
   \hline
\multirow{8}{*}{\rotatebox{90}{ {($a$): $U$}}}& 0  & 0.32 & 0.24 & 0.34 & 0.31 & 0.45 & 0.36 & \textbf{0.11} \\ 
&   2 & 0.39 & \textbf{0.34} & 0.36 & \textbf{0.34} & 0.40 & 0.36 & 39.75 \\ 
&   3 & \textbf{0.36} & 0.39 & 0.39 & \textbf{0.36} & 0.43 & 0.38 & 78.20 \\ 
&   5 & 0.63 & \textbf{0.51} & 0.59 & 0.57 & 0.57 & 0.59 & 212.60 \\ 
&   9 & 1.35 & 1.36 & 1.29 & 1.21 & 1.28 & \textbf{1.06} & 682.80 \\ 
&   16 & 4.01 & 3.88 & 3.91 & 3.89 & 3.41 & \textbf{3.36}   & $2.10^{3}$ \\ 
&  28 & 16.65 & 17.56 & 16.21 & 16.13 & 13.78 & \textbf{13.51} & $7.10^{3}$ \\ 
 & 50 & 154.52 & 165.05 & 133.19 & 142.32 &\textbf{ 109.12} & 116.59 & $2.10^{4}$ \\ 
\hline
\multirow{8}{*}{\rotatebox{90}{ {($b$): $T_1$}}} &   
0 & 0.31 & 0.29 & 0.32 & 0.34 & 0.38 & 0.40 & \textbf{0.10} \\ 
&   2 & 0.33 & 0.31 & \textbf{0.30} & 0.31 & 0.44 & 0.37 & $2.10^{8}$ \\ 
&   3 & 0.36 & \textbf{0.28} & 0.29 & 0.35 & 0.40 & 0.36 & $2.10^{7}$ \\ 
&   5 &\textbf{ 0.35} & 0.36 & 0.41 & 0.40 & 0.43 & 0.54 & $10^{9}$ \\ 
&   9 & 0.49 & \textbf{0.46} & 0.48 & 0.47 & 0.67 & 0.65 & $7.10^{9}$ \\ 
&   16 & 0.86 & 0.77 & 0.80 & \textbf{0.76 }& 0.98 & 0.93 & $8.10^{13}$ \\ 
&   28 & 1.74 & 1.76 & \textbf{1.64} & 1.78 & 2.01 & 1.92 & $5.10^{11}$ \\ 
&   50 & 5.49 & \textbf{5.28} & 5.38 & 5.52 & 5.59 & 5.84 & $2.10^{13}$ \\ 
   \hline
\multirow{8}{*}{\rotatebox{90}{ {($e$): $T_2$}}} &    0 & 0.29 & 0.28 & 0.37 & 0.29 & 0.46 & 0.33 & \textbf{0.12} \\ 
&   2 & 0.33 & 0.33 & \textbf{0.31} & 0.34 & 0.41 & 0.48 & 1.06 \\ 
&   3 & \textbf{0.35} & 0.40 & 0.42 & 0.38 & 0.63 & 0.41 & 0.59 \\ 
&   5 & 0.52 & 0.60 & \textbf{0.48} & 0.49 & 0.66 & 0.76 & 7.03 \\ 
&   9 & 0.86 & 1.02 & \textbf{0.79} & 0.98 & 1.10 & 1.20 & 6.10 \\ 
&   16 & \textbf{1.99} & 2.07 & 2.08 & 2.21 & 2.50 & 2.54 & 330.59  \\ 
&   28 & 5.80 & 5.59 & \textbf{5.50} & 5.88 & 5.92 & 6.20 & $9.10^{6 }$ \\ 
&   50 & \textbf{14.84} & 15.12 & 14.99 & 15.16 & 15.38 & 15.31 & $2.10^{4}$\\ 
   \hline
\end{tabular}
\caption{
 {Multivariate Gaussian case: Mean quadratic error} of the estimates of the variance for the different methods and for different contamination scenarios and fractions $\delta$.}
\end{table}

\subsubsection{Student case}

{We used a similar scheme for the Student distribution.}

\paragraph{No outlier.}

We considered a Student distribution with null mean vector, with variance $\Sigma_0$ and 3 degrees of freedom. 

We first focus on the accuracy of each method to estimate the variance and follow the same simulation plan as for the Gaussian case. Observe that in this case, the weighted averaged Robbins-Monro method is slightly less accurate for fixed sample sizes, but is slightly better for fixed computational budget. In Table \ref{tab::rob::student}, we first remark that the usual estimate of the variance clearly underperform, even for uncontaminated data. In addition, although gradient method with $50$ iterations is undoubtedly better, the Robbins-Monro alternative is a serious competitor.  Then, coupled with what has been observed in the Gaussian case, the method based on the Robbins-Monro algorithm seems the best option for estimating the variances of the clusters of robust mixture models.

\begin{figure}[H]
\centering
    \includegraphics[scale=0.5]{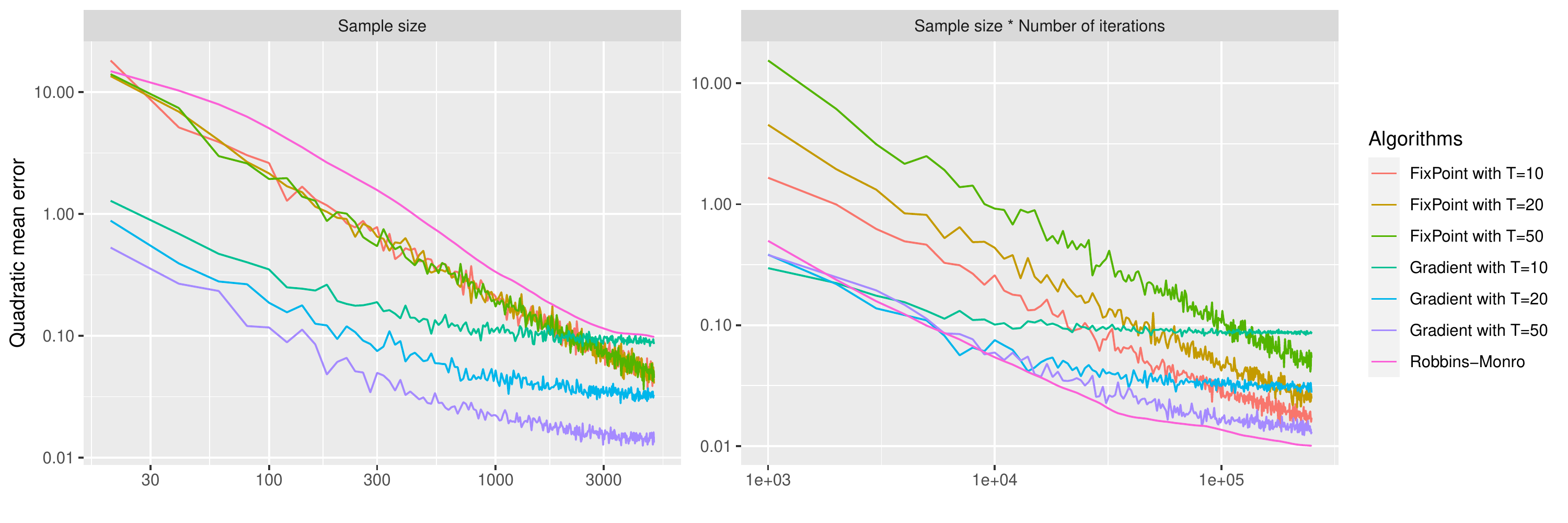}     
    \caption{Evolution of the quadratic mean error of the different methods with respect to the sample size (on the left) and to computation time (to the right).}
    \label{fig:Testim_robust_cov}
\end{figure}

\paragraph{With outliers.}

We then introduced an increasing fraction $\delta$ of outliers according to same three scenarios ($a$), ($b$) and ($d$) from Section \ref{sec:simDesign}. 

{The conclusion are the same as in the Gausian case.}

\begin{table}[H]
\centering
\begin{tabular}{cc|rrrrrrr}
& \rotatebox[origin=r]{360}{$\delta$ ($\%$)}  & \rotatebox{270}{FixPoint (R)}    & \rotatebox{270}{FixPoint (W)}    & \rotatebox{270}{Gradient (R)}    & \rotatebox{270}{Gradient (W)}    & \rotatebox{270}{Robbins (R)}    & \rotatebox{270}{Robbins (W)}    & \rotatebox[origin=l]{270}{Variance}       \\  
   \hline
\multirow{8}{*}{\rotatebox{90}{{($a$): $U$}}} &   0 & 0.29 & 0.25 & \textbf{0.20} & \textbf{0.20} & 0.50 & 0.46 & 19.78 \\ 
&   2 & 0.37 & 0.36 & 0.27 & \textbf{0.22} & 0.46 & 0.44 & 43.43\\ 
&   3 & 0.41 & 0.37 & 0.34 & \textbf{0.27} & 0.68 & 0.55 & 103.51\\ 
&   5 & 0.79 & 0.63 & 0.62 & \textbf{0.53} & 1.06 & 0.84 & 207.81 \\ 
&   9 & 2.01 & 1.82 & 1.91 &\textbf{ 1.63} & 2.34 & 1.90 & 733.99 \\ 
&   16 & 6.73 & 5.83 & 6.11 & \textbf{5.61} & 6.87 & 6.89 & $2.10^{3}$ \\ 
&   28 & 29.82 & 27.17 & 26.84 & \textbf{25.25} & 30.72 & 28.27 & $7.10^{3}$  \\ 
&   50 & 393.82 & 374.55 & 273.07 & \textbf{260.37} & 336.28 & 324.98 & $2.10^{4}$ \\ 
\hline
\multirow{8}{*}{\rotatebox{90}{{($b$): $T_1$}}} &   0 & 0.27 & 0.26 & \textbf{0.17} & 0.16 & 0.38 & 0.48 & 16.14 \\ 
&   2 & 0.37 & 0.31 & 0.21 & \textbf{0.17} & 0.52 & 0.46 & $10^{8}$\\ 
&   3 & 0.35 & 0.27 & 0.23 & \textbf{0.20} & 0.52 & 0.45 & $10^{10}$\\ 
&   5 & 0.44 & 0.39 & 0.31 & \textbf{0.27} & 0.62 & 0.69 & $3.10^{9}$ \\ 
&   9 & 0.83 & 0.75 & 0.67 & \textbf{0.59} & 1.22 & 0.93 & $10^{10}$ \\ 
&   16 & 2.18 & 1.97 & 1.90 & \textbf{1.77} & 2.74 & 1.98 & $2.10^{10}$ \\ 
&   28 & 6.54 & 6.17 & 6.08 & \textbf{5.64} & 7.00 & 6.05 & $5.10^{12}$  \\ 
&   50 & 32.39 & 30.08 & 29.16 & \textbf{27.99} & 31.48 & 29.79 & $2.10^{18}$ \\ 
\hline
\multirow{8}{*}{\rotatebox{90}{{($e$): $T_2$}}} &   0 & 0.30 & 0.26 & 0.19 & \textbf{0.18} & 0.40 & 0.34 & 12.77 \\ 
&   2 & 0.37 & 0.30 & \textbf{0.21} & \textbf{0.21} & 0.51 & 0.45 & 3.81  \\ 
&   3 & 0.31 & 0.32 &\textbf{ 0.21} & \textbf{0.21} & 0.42 & 0.40 & 9.72 \\ 
 &  5 & 0.33 & 0.29 & \textbf{0.22} & \textbf{0.22} & 0.50 & 0.43 & 38.64  \\ 
 &  9 & 0.44 & 0.39 & 0.34 & \textbf{0.30} & 0.61 & 0.59 & 14.00 \\ 
 &  16 & 0.84 & 0.80 & \textbf{0.69} & \textbf{0.69} & 0.98 & 0.95 & 778.37 \\ 
 &  28 & 2.08 & 1.96 & 1.95 & \textbf{1.91} & 2.27 & 2.14 & $3.10^{3}$ \\ 
 &  50 & 6.57 & 6.35 & 6.56 & \textbf{6.42} & 7.23 & 6.45 & 401.01 \\ 
   \hline
\end{tabular}
\caption{\label{tab::rob::student} {Multivariate Student case: Mean quadratic error} of the estimates of the variance for the different methods and for different contamination scenarios and fractions $\delta$.}
\end{table}

\subsection{Mixture models} \label{sec:simMixture}

The second part our simulation deals with mixture models. 

\subsubsection{Gaussian mixture model.}
We simulated $B=100$ datasets according to a Gaussian mixture model with each of the parameter configurations described in Section \ref{sec:simDesign}. We only present here the results for a total sample size of $n = 1500$) (that is $n_k = 500$ observations in each group).  We did not observe substantial differences between the results obtained when selecting the number of clusters $K$ with $BIC$ and $ICL$. As a consequence, we only present the results obtained with $BIC$. 

The first two columns of Figure \ref{fig:simGMM1500} compare the results of maximum-likelihood (GMM) inference with the proposed approach (RGMM) in terms of classification. When fixing the number of clusters to its true value $K^* = 3$, we observe a dramatic drop of the classification accuracy of GMM estimation, even for a very moderate fraction of outliers ($\delta=2\%$), as compared to RGMM, in all scenarios. We observe that estimating the number of clusters with $BIC$ improves the classification performances of GMM, at the price of an increase of the number of clusters. On the contrary, the RGMM approach keeps selecting the right number of clusters, even with a medium fraction of outliers $(\delta \sim 10-20\%)$. As a consequence, model selection does not  {significantly} improve the classification accuracy of RGMM. Lastly, we observe that the difference between GMM and RGMM is even more obvious when outliers can each be associated with one cluster, that is under scenarios ($c$) and ($e$), as opposed to scenarios ($b$) and ($d$), respectively. 

\begin{figure}[ht]
  \centering
    \begin{tabular}{c|m{.2\textwidth}m{.2\textwidth}|m{.2\textwidth}m{.2\textwidth}}
      & 
      \multicolumn{2}{c|}{Classification} & 
      \multicolumn{2}{c}{Parameter estimation} \\ 
      &
      \multicolumn{1}{c}{ARI} & \multicolumn{1}{c|}{$\widehat{K}$} & 
      \multicolumn{1}{c}{$MSE(\mu)$} & \multicolumn{1}{c}{$MSE(\Sigma)$} \\ 
      \hline
      ($a$) &
      \includegraphics[width=.2\textwidth, trim=10 10 10 50, clip=]{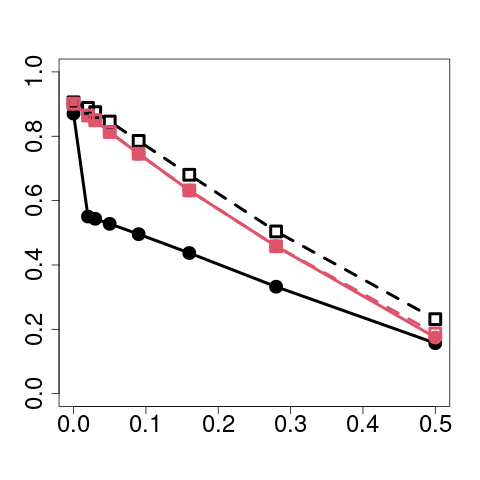} &
      \includegraphics[width=.2\textwidth, trim=10 10 10 50, clip=]{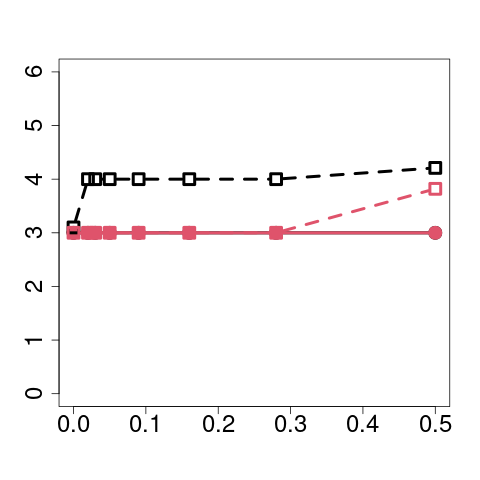} &
      \includegraphics[width=.2\textwidth, trim=10 10 10 50, clip=]{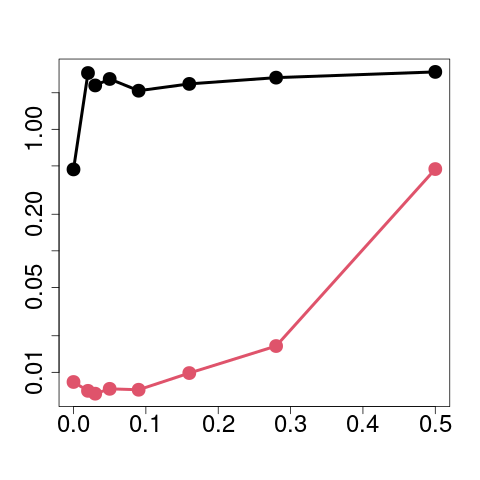} &
      \includegraphics[width=.2\textwidth, trim=10 10 10 50, clip=]{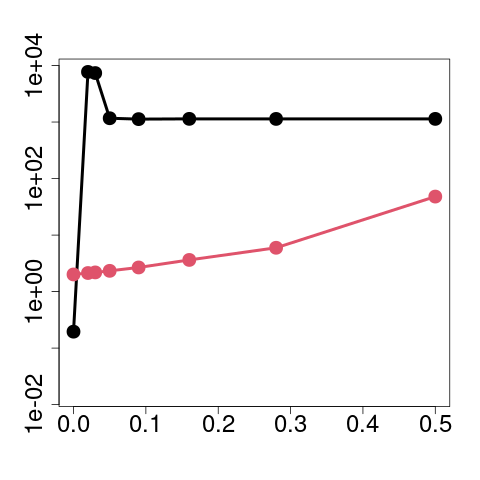} \\
      ($b$) &
      \includegraphics[width=.2\textwidth, trim=10 10 10 50, clip=]{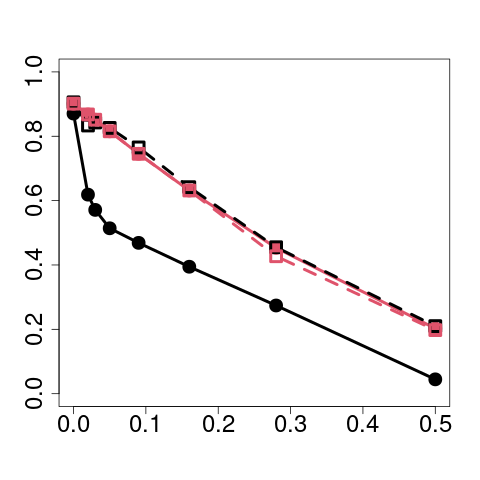} &
      \includegraphics[width=.2\textwidth, trim=10 10 10 50, clip=]{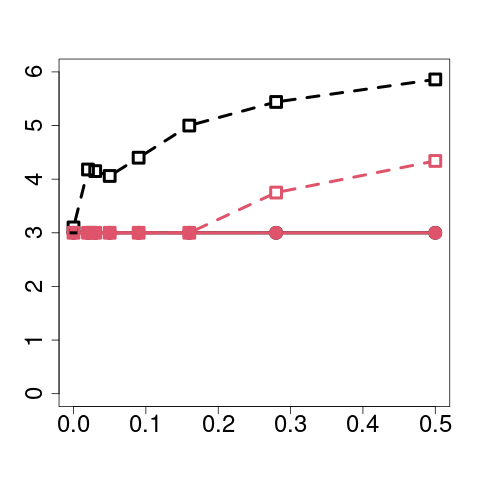} &
      \includegraphics[width=.2\textwidth, trim=10 10 10 50, clip=]{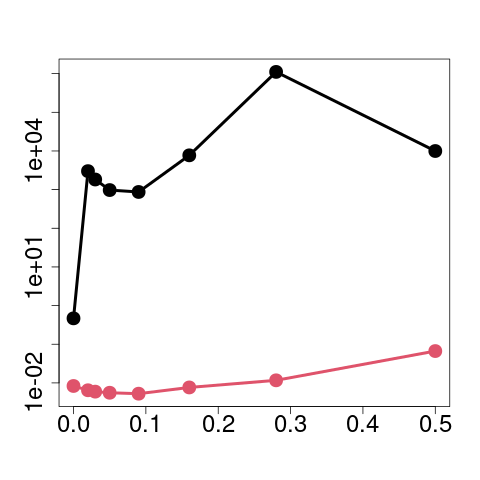} &
      \includegraphics[width=.2\textwidth, trim=10 10 10 50, clip=]{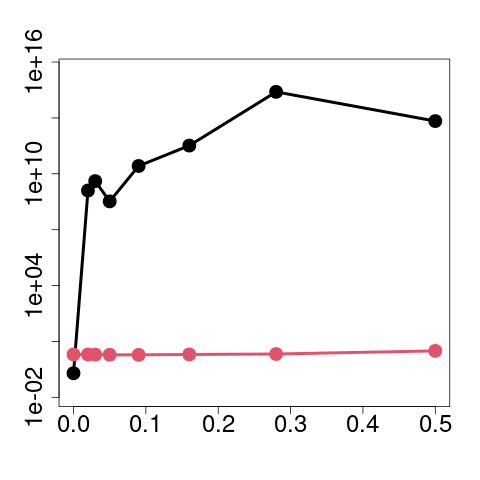} \\
      ($c$) &
      \includegraphics[width=.2\textwidth, trim=10 10 10 50, clip=]{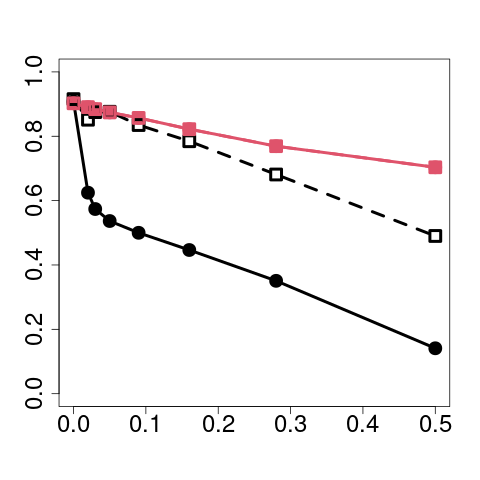} &
      \includegraphics[width=.2\textwidth, trim=10 10 10 50, clip=]{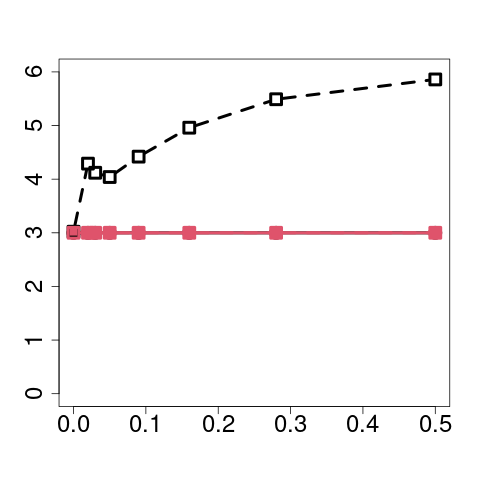} &
      \includegraphics[width=.2\textwidth, trim=10 10 10 50, clip=]{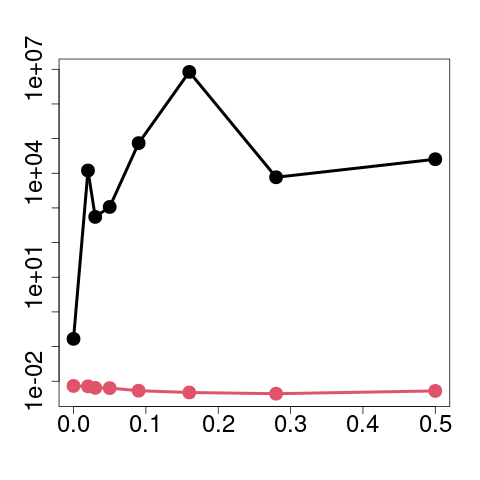} &
      \includegraphics[width=.2\textwidth, trim=10 10 10 50, clip=]{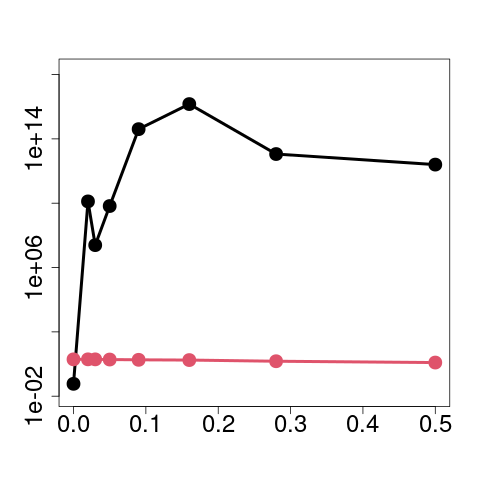} \\
      ($d$) &
      \includegraphics[width=.2\textwidth, trim=10 10 10 50, clip=]{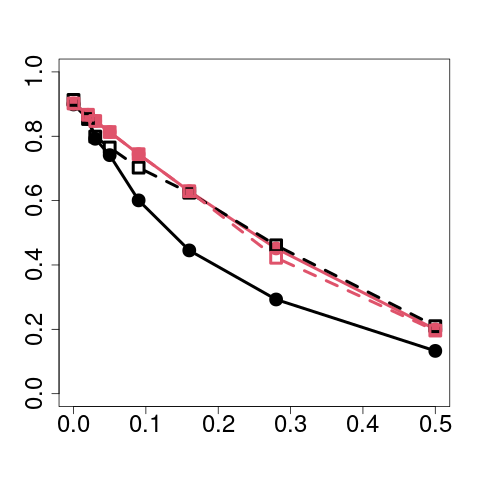} &
      \includegraphics[width=.2\textwidth, trim=10 10 10 50, clip=]{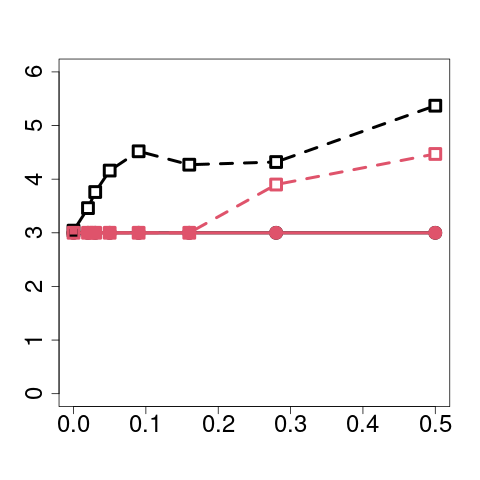} &
      \includegraphics[width=.2\textwidth, trim=10 10 10 50, clip=]{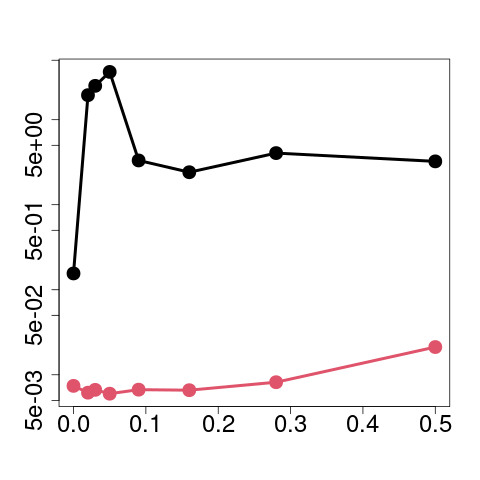} &
      \includegraphics[width=.2\textwidth, trim=10 10 10 50, clip=]{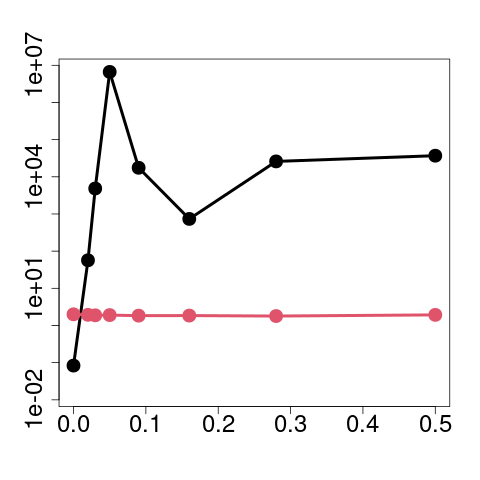} \\
      ($e$) &
      \includegraphics[width=.2\textwidth, trim=10 10 10 50, clip=]{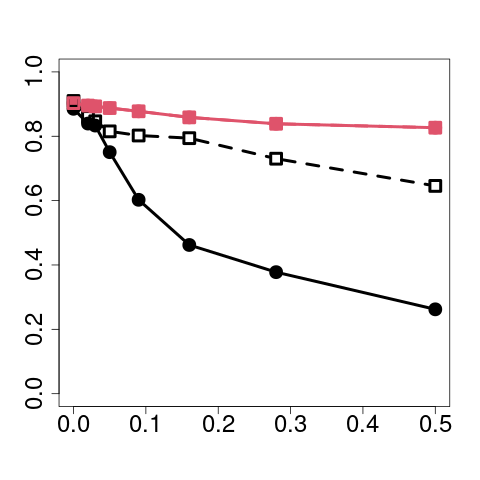} &
      \includegraphics[width=.2\textwidth, trim=10 10 10 50, clip=]{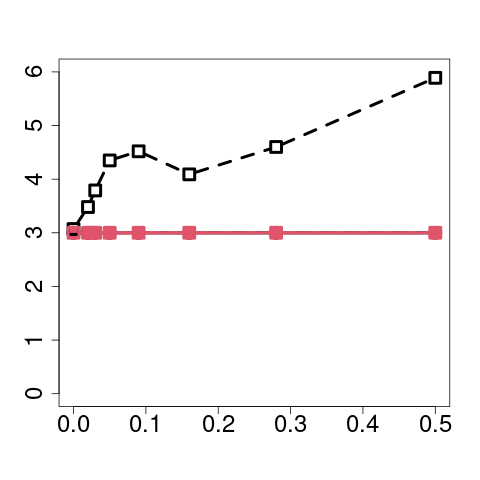} &
      \includegraphics[width=.2\textwidth, trim=10 10 10 50, clip=]{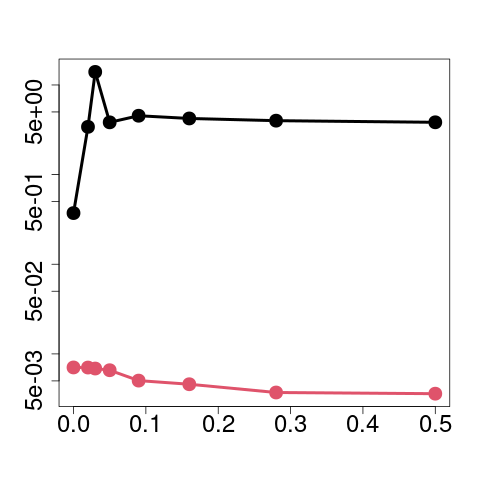} &
      \includegraphics[width=.2\textwidth, trim=10 10 10 50, clip=]{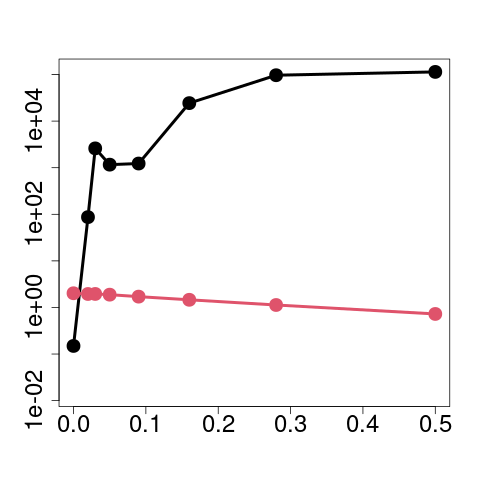} 
    \end{tabular}
    \caption{Gaussian mixture model: classification accuracy ($ARI$), estimated number of clusters $\widehat{K}$, estimation error fu the mean ($MSE(\mu)$) and for the variance ($MSE(\Sigma)$) for scenarios ($a$) to ($e$), with $n_k = 500$ observation in each of the $K^*$ clusters ($n = 1500$). Black: maximum likelihood (GMM); red: robust estimation (RGMM). Solid line ($\bullet$): with true number of clusters $K^*$; dotted line ($\square$): with number of clusters estimated with $BIC$. \label{fig:simGMM1500}}
\end{figure}

The last two columns of Figure \ref{fig:simGMM1500} compare the respective accuracies of GMM and RGMM in terms of parameter estimation. The precision achieved by RGMM is several order of magnitude better than this of GMM, and, except under scenario ($a$), this accuracy remains the same for large contamination fractions (up to $\delta = 50\%$). Again, model selection does not improve the estimation precision of the robust approach. 


Figure \ref{fig:simGMM300}, given in Appendix \ref{app:simResults}, is the same as Figure \ref{fig:simGMM1500}, but was obtained with $n_k = 100$ observations in each cluster (that is $n = 300$). The same conclusions, although less contrasted, can be drawn from it.


\FloatBarrier

\subsubsection{Student mixture model.}
We then simulated $B=100$ datasets according to a Student mixture model with the same set of parameter configurations (see Section \ref{sec:simDesign}). Again, we only present here the results with $n_k = 500$ observations in each group ($n = 1500$).  For the same reason as in the Gaussian case, we only present the results obtained with the $BIC$ criterion. 

Figure \ref{fig:simTMM1500} is organized in the same way as Figure \ref{fig:simGMM1500}. In terms of classification, we observe a dramatic drop of the accuracy obtained with maximum likelihood inference (TMM: as performed by the {\tt teigen} R package), as compared to its robust counterpart (RTMM). We also observe that, depending on the simulation scenario, the classification accuracy of the robust approach decreases more or less rapidely, the better results being obtained under the scenarios where outliers can each be associated with a clusters (($c$) and ($e$)).

\begin{figure}[ht]
  \centering
    \begin{tabular}{c|m{.2\textwidth}m{.2\textwidth}|m{.2\textwidth}m{.2\textwidth}}
      & 
      \multicolumn{2}{c|}{Classification} & 
      \multicolumn{2}{c}{Parameter estimation} \\ 
      &
      \multicolumn{1}{c}{ARI} & \multicolumn{1}{c|}{$\widehat{K}$} & 
      \multicolumn{1}{c}{$MSE(\mu)$} & \multicolumn{1}{c}{$MSE(\Sigma)$} \\ 
      \hline
      ($a$) &
      \includegraphics[width=.2\textwidth, trim=10 10 10 50, clip=]{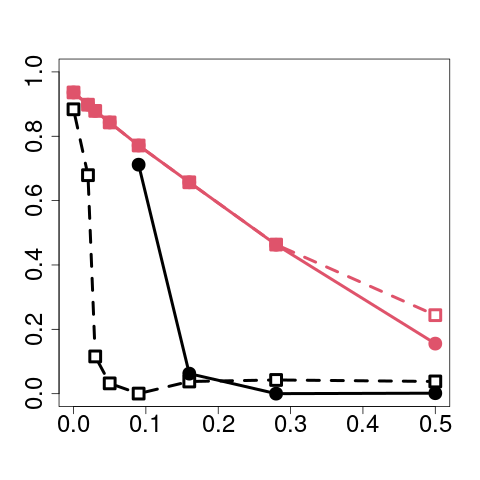} &
      \includegraphics[width=.2\textwidth, trim=10 10 10 50, clip=]{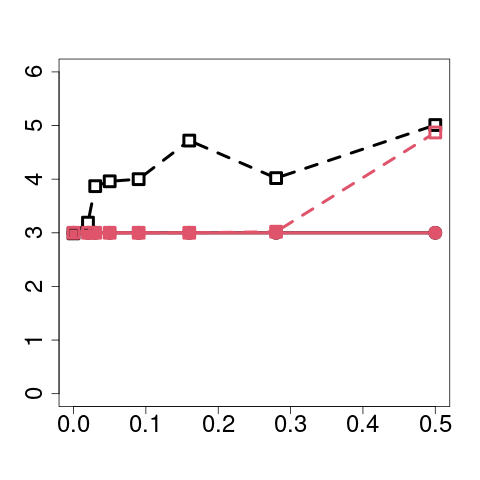} &
      \includegraphics[width=.2\textwidth, trim=10 10 10 50, clip=]{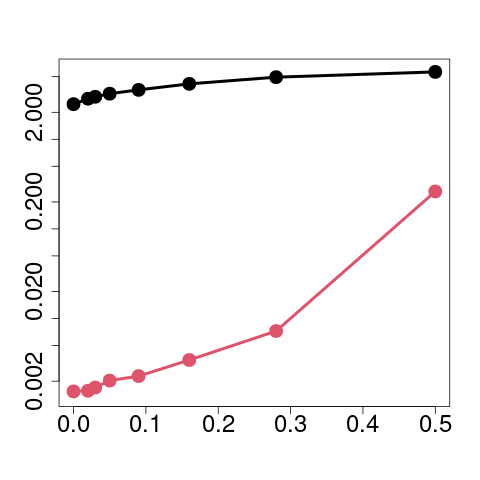} &
      \includegraphics[width=.2\textwidth, trim=10 10 10 50, clip=]{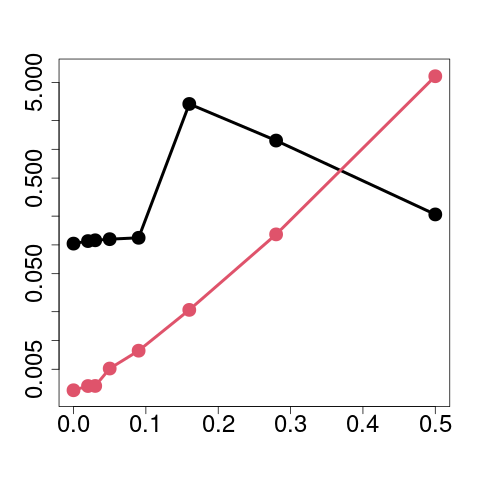} \\
      ($b$) &
      \includegraphics[width=.2\textwidth, trim=10 10 10 50, clip=]{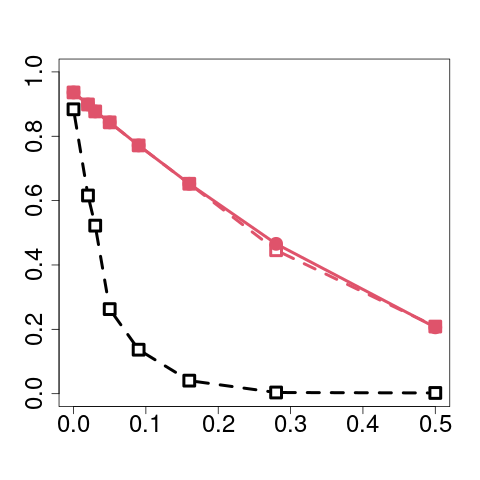} &
      \includegraphics[width=.2\textwidth, trim=10 10 10 50, clip=]{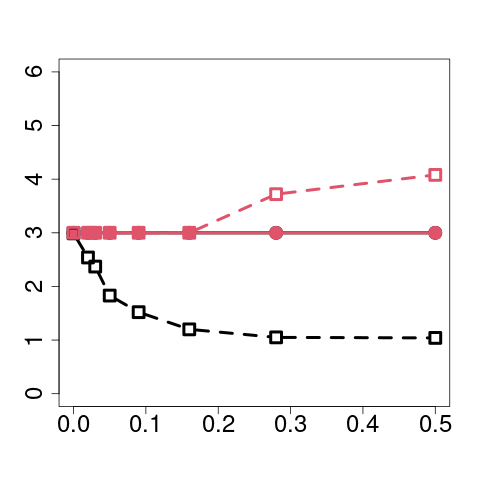} &
      \includegraphics[width=.2\textwidth, trim=10 10 10 50, clip=]{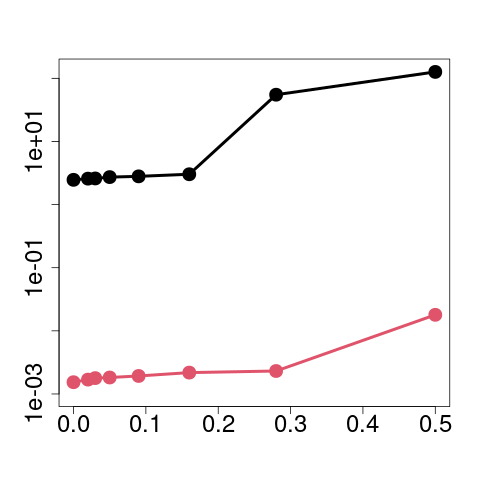} &
      \includegraphics[width=.2\textwidth, trim=10 10 10 50, clip=]{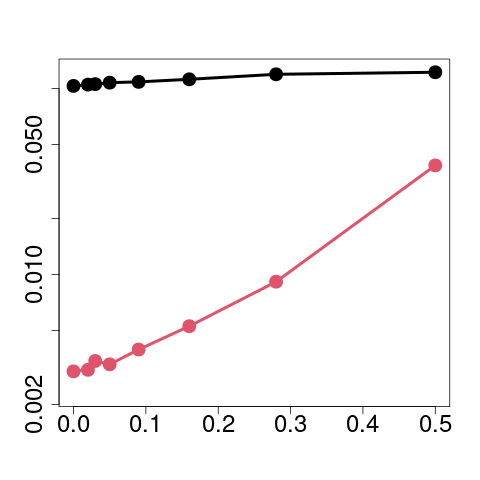} \\
      ($c$) &
      \includegraphics[width=.2\textwidth, trim=10 10 10 50, clip=]{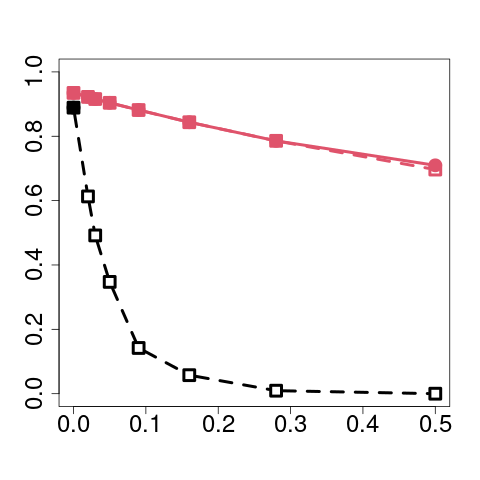} &
      \includegraphics[width=.2\textwidth, trim=10 10 10 50, clip=]{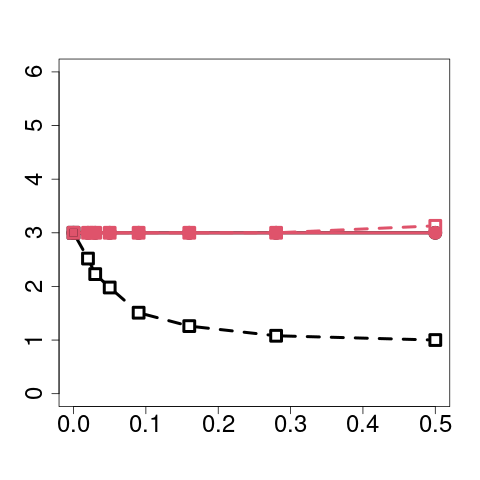} &
      \includegraphics[width=.2\textwidth, trim=10 10 10 50, clip=]{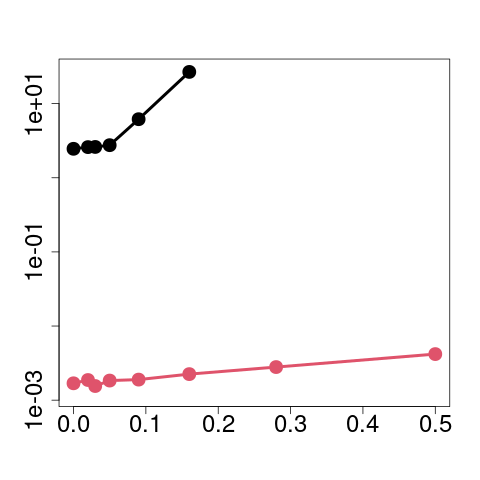} &
      \includegraphics[width=.2\textwidth, trim=10 10 10 50, clip=]{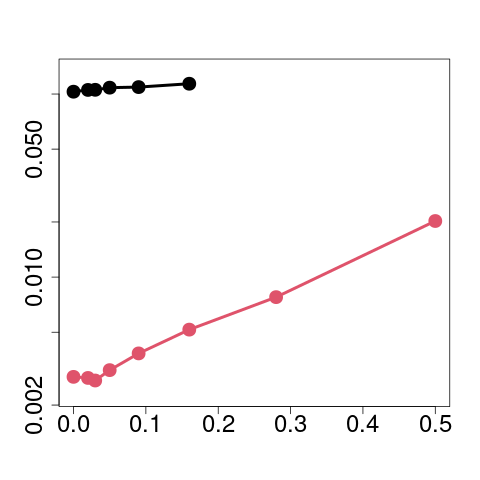} \\
      ($d$) &
      \includegraphics[width=.2\textwidth, trim=10 10 10 50, clip=]{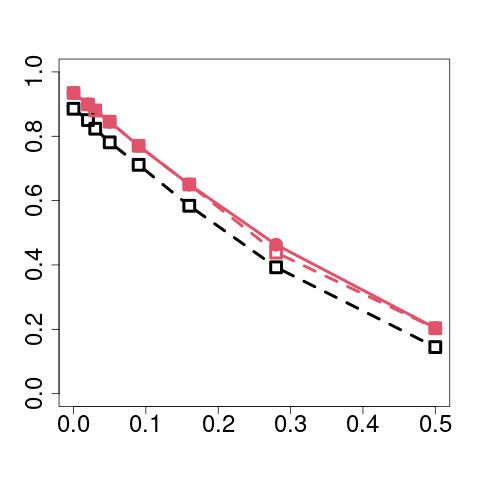} &
      \includegraphics[width=.2\textwidth, trim=10 10 10 50, clip=]{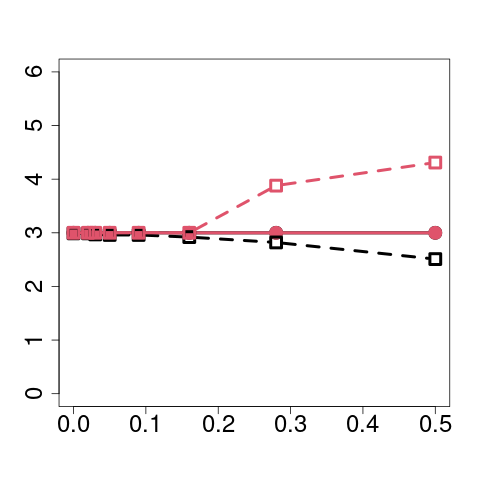} &
      \includegraphics[width=.2\textwidth, trim=10 10 10 50, clip=]{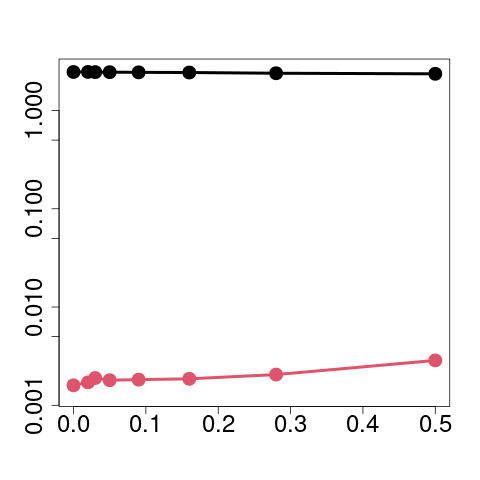} &
      \includegraphics[width=.2\textwidth, trim=10 10 10 50, clip=]{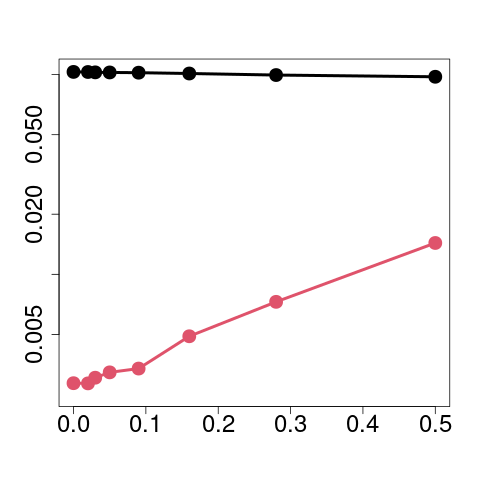} \\
      ($e$) &
      \includegraphics[width=.2\textwidth, trim=10 10 10 50, clip=]{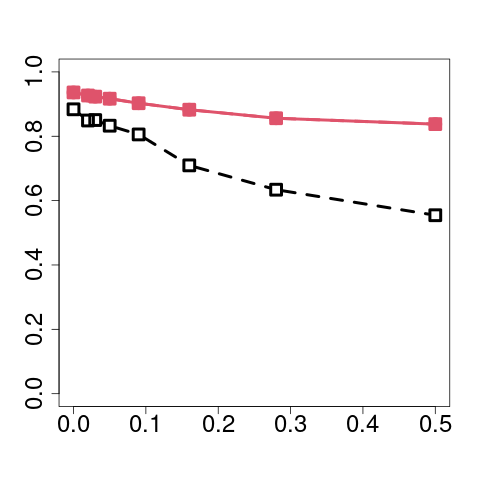} &
      \includegraphics[width=.2\textwidth, trim=10 10 10 50, clip=]{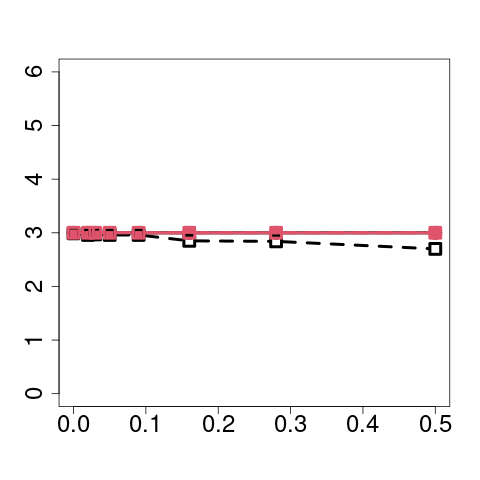} &
      \includegraphics[width=.2\textwidth, trim=10 10 10 50, clip=]{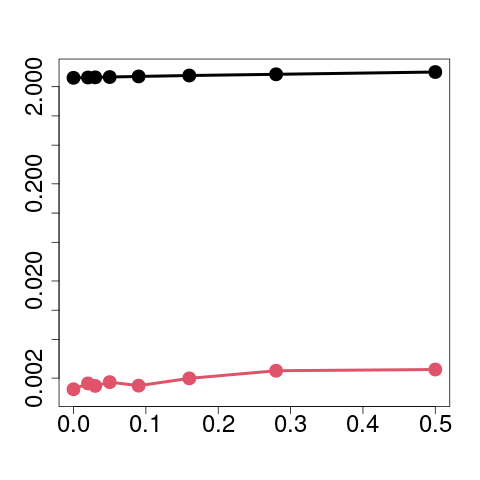} &
      \includegraphics[width=.2\textwidth, trim=10 10 10 50, clip=]{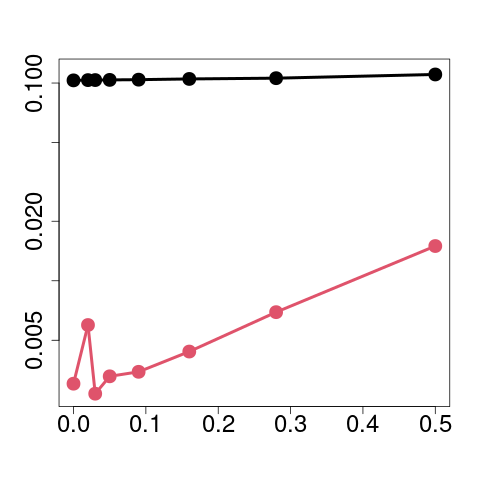} 
    \end{tabular}
    \caption{Student mixture model: classification accuracy ($ARI$), estimated number of clusters $\widehat{K}$, estimation error fu the mean ($MSE(\mu)$) and for the variance ($MSE(\Sigma)$) for scenarios ($a$) to ($e$), with $n_k = 500$ observation in each of the $K^*$ clusters ($n = 1500$). Same legend as Figure \ref{fig:simGMM1500}. \label{fig:simTMM1500}}
\end{figure}

The last two columns of Figure \ref{fig:simTMM1500} also shows better performances of the robust approach RTMM as compared to maximum likelihood TMM in terms of precision accuracy. Observe that several curves associated with TMM display an erratic behavior due to convergence issues of the EM algorithm (see Figure \ref{fig:simTMM1500fail} in Appendix \ref{app:simResults}).

Similarly to de Gaussian case, Figure \ref{fig:simTMM300}, given in Appendix \ref{app:simResults}, is the same as Figure \ref{fig:simTMM1500} for $n_k = 100$ observations per cluster ($n = 300$): again similar conclusions can be drawn from it.

\FloatBarrier

\paragraph{Aknowledgement.}
We are grateful to the INRAE MIGALE bioinformatics facility (MIGALE, INRAE, 2020. Migale bioinformatics Facility, \url{doi:10.15454/1.5572390655343293E12}) for providing computing and storage resources.

\section{Proofs}\label{sec:proofs}
\begin{proof}[Proof of Proposition \ref{prop1}]
Remark that one can rewrite 
\[
\mathbb{E}_{\theta^{*}} \left[ G_{2}(m)  \right] = \mathbb{E}_{\theta^{*}} \left[ \sum_{k=1}^{K} \mathbf{1}_{Z=k}  \left\| X - m_{k} \right\|   \right] 
\]
and this function is Frechet-differentiable with 
\[
\nabla_{m_{k}} \mathbb{E}_{\theta^{*}} \left[ G_{2}(m)  \right] = - \mathbf{E}_{\theta^{*}} \left[   \frac{X - m_{k}}{\left\| X - m_{k} \right\|} \mathbf{1}_{Z=k}   \right] .
\]
and the zero of the gradient of $G_{2}$ correspond to the median of the classes.
Since $X |Z =k$ is symmetric, one has $\nabla_{m_{k}}\mathbb{E}_{\theta^{*}} \left[G_{2} \left( \mu_{k}^{*} \right) \right] = 0$, i.e $\mu_{k}^{*} = m_{k}^{*}$. In a same way, one has
\begin{align*}
\nabla_{V_{k}} \mathbb{E}_{\theta^{*}} \left[ G_{3}(V) \right]  & = - \mathbb{E}_{\theta^{*}} \left[  \frac{\left( X - m_{k}^{*} \right)\left( X - m_{k}^{*} \right)^{T} - V}{\left\|   \left( X - m_{k}^{*} \right)\left( X - m_{k}^{*} \right)^{T} - V  \right\|_{F} }  \mathbf{1}_{Z = k} \right] 
\end{align*}
and the zero of the gradient correspond to the MCM of the class $V_{k}^{*}$. Then, since $X|Z =k$ is symmetric,  the zero of the gradient satisfies $\Sigma_{k}^{*} = \Psi \left( V_{k}^{*} \right)$.
\end{proof}

\begin{proof}[Proof of Proposition \ref{prop:fixpoint}]

Since $\pi^{*}$ is a zero of the gradient of the Lagrangian, one has $\pi_{k}^{*} = \mathbb{E}\left[ \tau_{k}(X) \right]$, i.e one has $\pi^{*} = g_{1}\left( \pi^{*} \right)$. In a same way, $m_{k}^{*}$ is a zero of $\nabla_{m_{k}} G_{2} $, where $\nabla_{m_{k}}$ denotes the partial gradient with respect to $m_{k}$. Furthermore,
\begin{align*}
\nabla_{m_{k}}G_{2}(m) & = - \mathbb{E}\left[ \tau_{k}(X)\frac{\left( X - m_{k} \right)}{\left\| X - m_{k} \right\|  } \right] = 0  & 
& \Leftrightarrow & 
\mathbb{E}\left[ \tau_{k}(X) \frac{X}{\left\| X - m_{k} \right\|} \right] 
& = m_{k} \mathbb{E}\left[ \tau_{k}(X) \frac{1}{\left\| X - m_{k} \right\|} \right] \\
& & & \Leftrightarrow & 
m_{k} & = g_{2,k}\left( m_{k} \right) ,
\end{align*}
and in a particular case, $m^{*}$ is a minimizer of $G_{2}$ if and only if $m^{*} = g_{2}\left( m^{*} \right)$.
In a same way, denoting by $\nabla_{V_{k}}$ the gradient of $G_{3}$ with respect to $V_{k}$, one has
\begin{align*}
0 = \nabla_{V_{k}}G_{3}(m^{*},V) & = - \mathbb{E}\left[ \tau_{k}(X) \frac{\left( X - m_{k}^{*} \right) \left( X - m_{k}^{*} \right) - V_{k}}{\left\| \left( X - m_{k}^{*} \right) \left( X - m_{k}^{*} \right) - V_{k} \right\|_{F} }\right]  \\
\Leftrightarrow \quad \mathbb{E}\left[ \tau_{k}(X) \frac{\left( X - m_{k}^{*} \right) \left( X - m_{k}^{*} \right) }{\left\| \left( X - m_{k}^{*} \right) \left( X - m_{k}^{*} \right) - V_{k} \right\|_{F}} \right] & = V_{k} \mathbb{E}\left[ \tau_{k}(X) \frac{1}{\left\| \left( X - m_{k}^{*} \right) \left( X - m_{k}^{*} \right) - V_{k} \right\|_{F}} \right] ,
\end{align*}
which concludes the proof.

\end{proof}

\bibliographystyle{apalike}
\bibliography{biblio}

\appendix
\section{Appendix}\label{sec:appendix}
\subsection{Simulation design} \label{app:simDesign}

The vectors of means used for the simulation study where the following:
\begin{align} \label{eq:mu}
  \mu_1^\intercal 
  & = [\begin{array}{rrrrr} 0 & 0 & 0 & 0 & 0 \end{array}], & 
  \mu_2^\intercal 
  & = [\begin{array}{rrrrr} 3 & 3 & 3 & 3 & -3 \end{array}], &
  \mu_3^\intercal 
  & = [\begin{array}{rrrrr} -3 & -3 & -3 & -3 & -3 \end{array}], 
\end{align}
and the variance matrices were
\begin{align} \label{eq:Sigma}
  \Sigma_1 & = \left[ 
      \begin{array}{rrrrr} 
      2 & 0.43 & 0.41 & 0.15 & 0.68 \\ 
      0.43 & 2 & 0.7 & 0.49 & 0.89 \\ 
      0.41 & 0.7 & 2 & 0.17 & 0.42 \\ 
      0.15 & 0.49 & 0.17 & 2 & 0.43 \\ 
      0.68 & 0.89 & 0.42 & 0.43 & 2 
      \end{array} 
    \right] &
  \Sigma_2 & = \left[ 
    \begin{array}{rrrrr} 
    1 & 0.46 & 0.17 & 0.04 & 1.06 \\ 
    0.46 & 2 & 0.61 & 0.18 & 1.22 \\ 
    0.17 & 0.61 & 3 & 0.7 & 0.65 \\ 
    0.04 & 0.18 & 0.7 & 4 & 0.16 \\ 
    1.06 & 1.22 & 0.65 & 0.16 & 5 
    \end{array} 
    \right] \nonumber \\
    \\
  \Sigma_3 & = \left[ 
    \begin{array}{rrrrr} 
    1 & 0.6 & 0.11 & 0.03 & 0.26 \\ 
    0.6 & 0.5 & 0.09 & 0.02 & 0.17 \\ 
    0.11 & 0.09 & 0.33 & 0.03 & 0.04 \\ 
    0.03 & 0.02 & 0.03 & 0.25 & 0.01 \\ 
    0.26 & 0.17 & 0.04 & 0.01 & 0.2 
    \end{array} 
    \right] &
  \Sigma_0 & = \left[ 
      \begin{array}{rrrrr} 
      4 & 0.86 & 0.83 & 0.29 & 1.35 \\ 
      0.86 & 4 & 1.4 & 0.97 & 1.79 \\ 
      0.83 & 1.4 & 4 & 0.35 & 0.84 \\ 
      0.29 & 0.97 & 0.35 & 4 & 0.86 \\ 
      1.35 & 1.79 & 0.84 & 0.86 & 4 
      \end{array}   
    \right] \nonumber 
\end{align}

\subsection{Illustration of the results with the R package RGMM}

In this Section, we explain how to use the R package \texttt{RGMM} to illustrate the results. First, let us consider the following R code:
\begin{verbatim}
> mu <- matrix( c(rep(0,10),rep(2,10),rep(-2,10)), byrow=T, nrow=3)
> ech <- Gen_MM(nk = rep(200,3), delta=0.1,mu=mu)
> X<- ech$X
> Result <- RobMM(X)
\end{verbatim}
The function \texttt{Gen$\_$MM} enables to generate a sample of mixture model (Gaussian, Student or Laplace), whose centers are the raws of the matrix \texttt{mu}. The number  of data by cluster is given by \texttt{nk} while \texttt{delta} gives the proportion of contaminated data. The function \texttt{RobMM} gives the results obtained with the help of our method.  One can see the vignette for more details and to see the different options. 

We now focus on the function \texttt{RMMplot} which enables to illustrate the results. More precisely, we now comment the different available graphics.

\begin{verbatim}
> RMMplot(Result,graph=c('Two_Dim'))
\end{verbatim}

\begin{figure}[H]
\centering
    \includegraphics[scale=0.5]{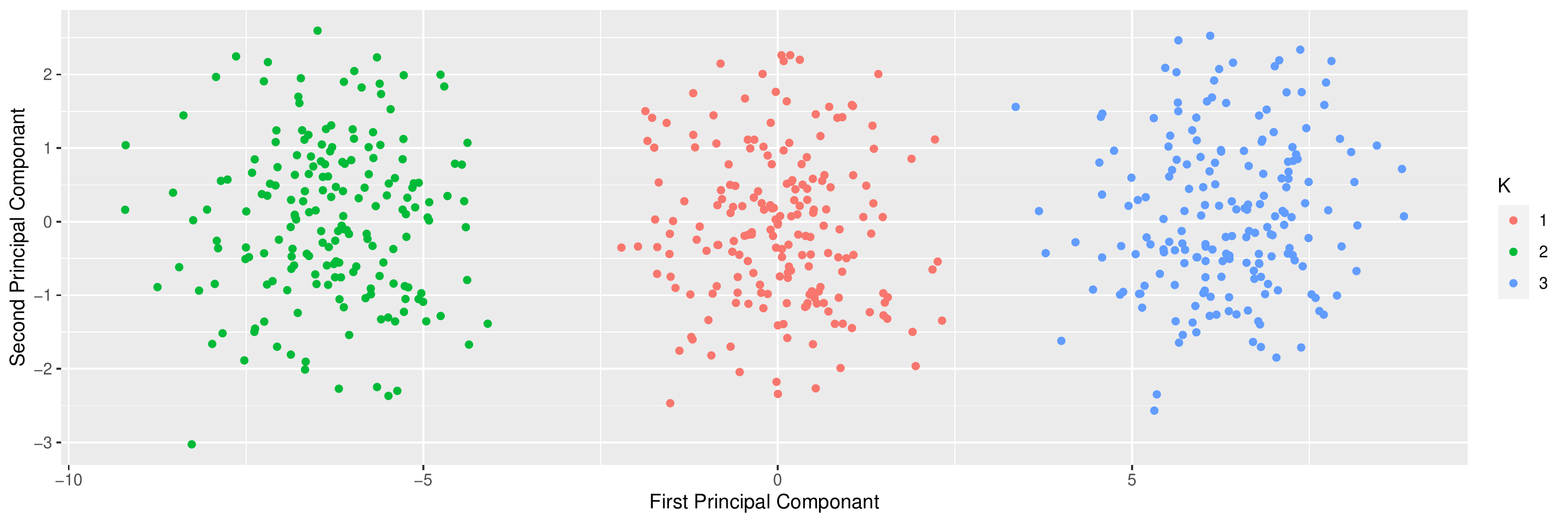}    
\end{figure}

The option \texttt{'Two$\_$Dim'} enables to represent the 2 first principal components of the data using robust principal component analysis components (RPCA) (see \cite{CG2015}).

\begin{verbatim}
> RMMplot(Result,graph=c('Two_Dim_Uncertainty'))
\end{verbatim}

\begin{figure}[H]
\centering
    \includegraphics[scale=0.5]{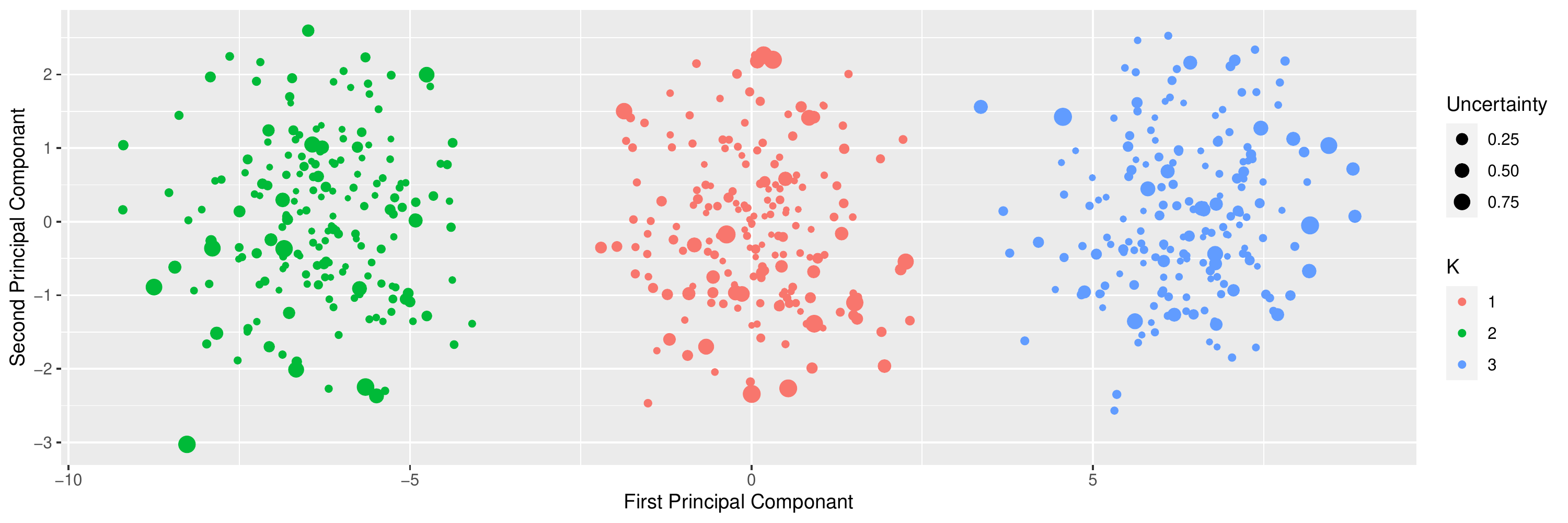}    
\end{figure}

The option \texttt{'Two$\_$Dim$\_$Uncertainty'} enables also to represent the 2 first principal components, but the size of the points is proportional to the uncertainty of the classification of the data.

\begin{verbatim}
> RMMplot(Result,graph=c('ICL'))
> RMMplot(Result,graph=c('BIC'))
\end{verbatim}

\begin{figure}[H]
\centering
    \includegraphics[scale=0.5]{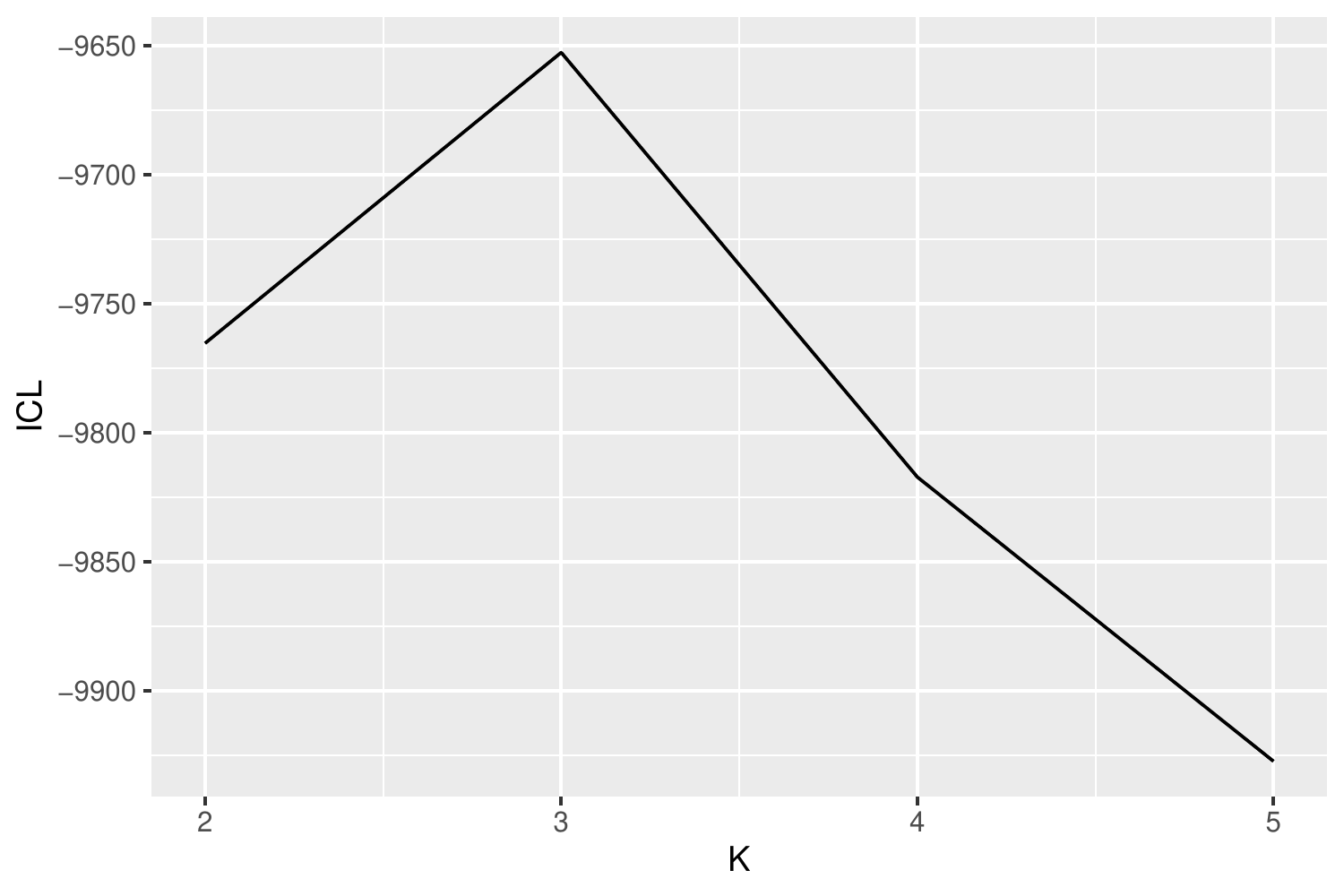}    
    \includegraphics[scale=0.5]{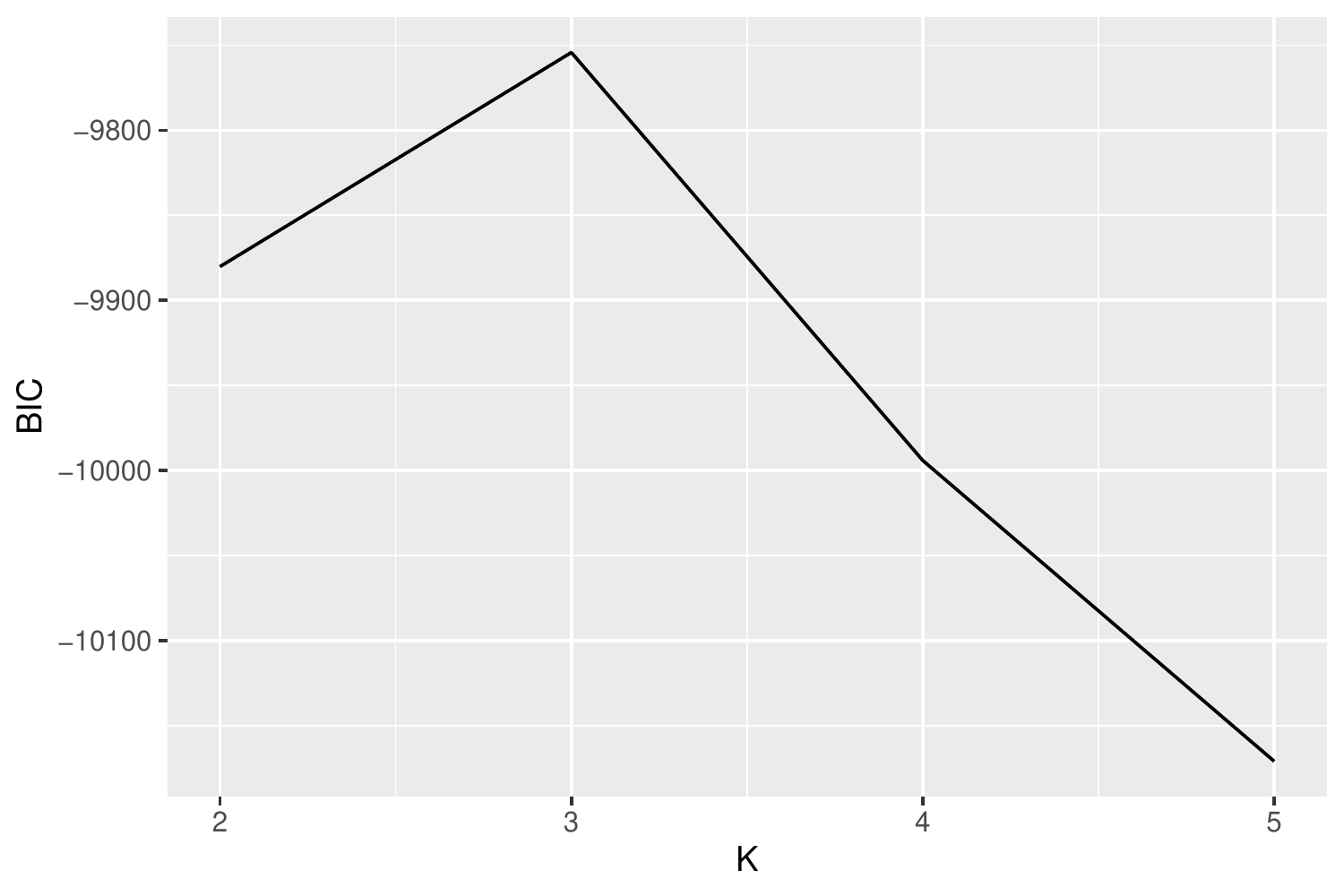}    
\end{figure}

Options \texttt{'ICL'} and \texttt{'BIC'} enables to visualize the evolution of the criterion with respect to the number of clusters $K$.

\begin{verbatim}
RMMplot(Result,graph=c('Profiles'))
\end{verbatim}

\begin{figure}[H]
\centering
    \includegraphics[scale=0.5]{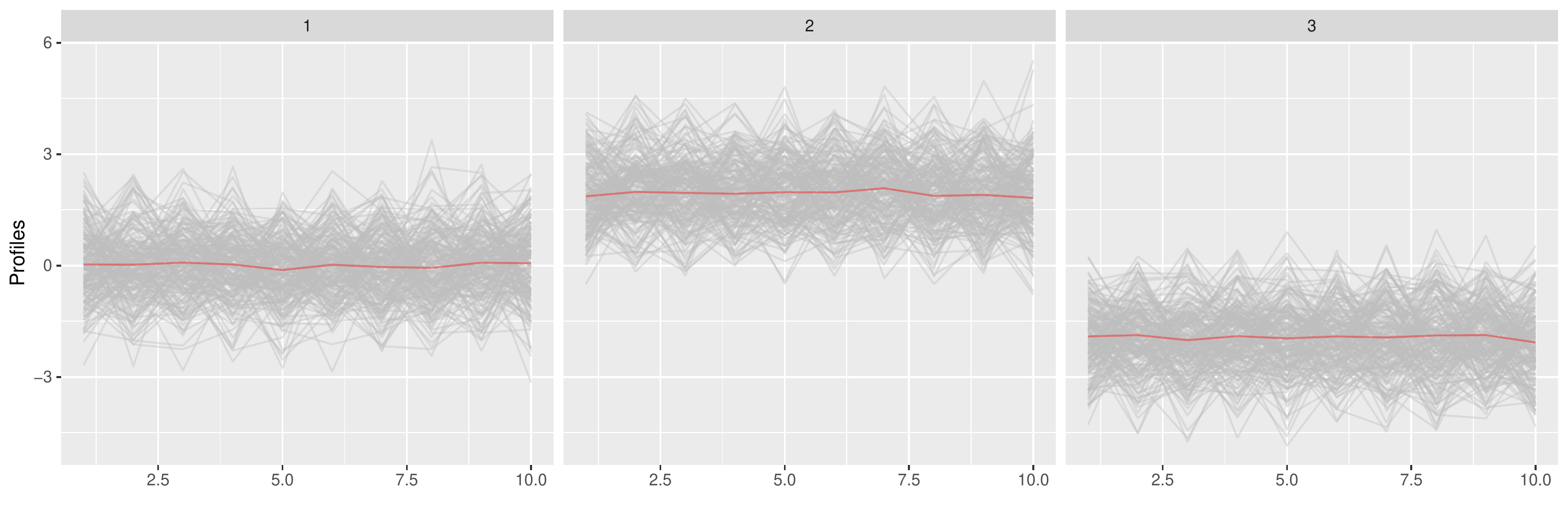}    
\end{figure}

Option \texttt{'Profiles'} allows to visualize data points in dimensions higher than $3$. More precisely, we represent data as curves that we call "profiles", gathered it by cluster, and represented the centers of the groups in red.

\begin{verbatim}
RMMplot(Result,graph=c('Uncertainty'))
\end{verbatim}

\begin{figure}[H]
\centering
    \includegraphics[scale=0.5]{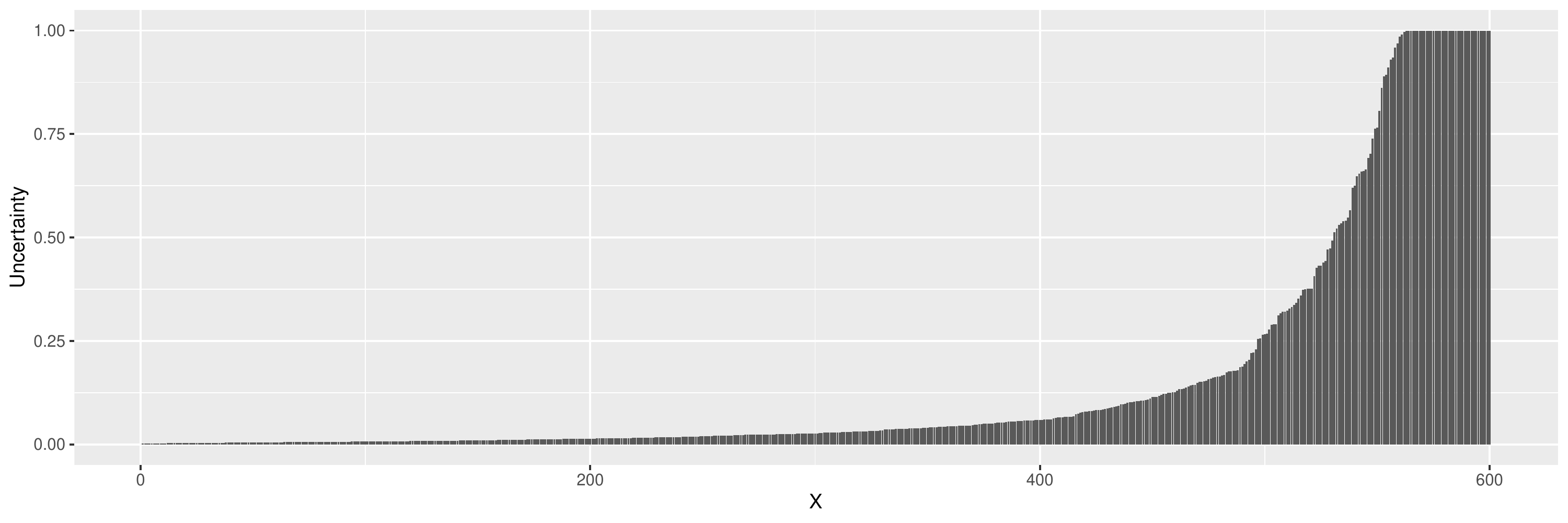}    
\end{figure}

Option \texttt{'Uncertainty'} enables to visualize the uncertainty of classification of the data.

\subsection{Additional simulation results for the estimation of the variance}
\label{app::variance}
In this section, we provide analogous tables as in Section \ref{sec:simVariance}, but with analogous calculus budget. More precisely, for gradient and fix point method, we consider a sample size of $N= 2000$ for the Monte-Carlo method, while for Robbins-Monro procedure, we consider a sample size of $N= 100000$. One can remark that Robbins-Monro  procedure provide better results for analogous computational budgets. Nevertheless, the difference is very slight because of the sample size for estimating the MCM is moderate, and the principal error seems to come from this "bad" estimation.

\begin{table}[H]
\centering
\begin{tabular}[b]{cc|rrrrrrr}
& \rotatebox[origin=r]{360}{$\delta$ ($\%$)}  & \rotatebox{270}{FixPoint (R)}    & \rotatebox{270}{FixPoint (W)}    & \rotatebox{270}{Gradient (R)}    & \rotatebox{270}{Gradient (W)}    & \rotatebox{270}{Robbins (R)}    & \rotatebox{270}{Robbins (W)}    & \rotatebox[origin=l]{270}{Variance}       \\  
   \hline
\multirow{8}{*}{\rotatebox{90}{ {($a$): $U$}}}& 0  & 0.35 & 0.29 & 0.36 & 0.31 & 0.19 & 0.18 & \textbf{0.11} \\ 
&   2 & 0.41 & 0.37 & 0.41 & 0.31 & 0.21 &\textbf{ 0.19} & 34 \\ 
&   3 & 0.43 & 0.36 & 0.37 & 0.44 & 0.26 & \textbf{0.24} & 75 \\ 
&   5 & 0.67 & 0.58 & 0.63 & 0.57 & 0.48 & \textbf{0.44} & 220 \\ 
&   9 & 1.27 & 1.25 & 1.33 & 1.18 & 1.13 & \textbf{1.08} & 690 \\ 
&   16 & 4.06 & 3.82 & 3.89 & \textbf{3.69} & 3.77 & 3.70 & $2.10^{3}$ \\ 
&  28 & 17.09 & 17.53 & 16.84 & \textbf{16.12} & 16.84 & 16.88 & $7.10^{3}$ \\ 
 & 50 & 152.32 & 158.95 & \textbf{128.75} & 136.53 & 147.27 & 153.80 & $2.10^{4}$ \\ 
\hline
\multirow{8}{*}{\rotatebox{90}{ {($b$): $T_1$}}} &   
0 & 0.27 & 0.24 & 0.31 & 0.27 & 0.15 & 0.13 & \textbf{0.09} \\ 
&   2 & 0.32 & 0.31 & 0.32 & 0.32 & 0.19 & \textbf{0.17} & $3.10^{8}$ \\ 
&   3 & 0.29 & 0.29 & 0.30 & 0.28 & 0.17 & \textbf{0.16} & $10^{13}$ \\ 
&   5 & 0.37 & 0.35 & 0.33 & 0.34 & 0.19 & \textbf{0.18} & $10^{12}$ \\ 
&   9 & 0.46 & 0.43 & 0.44 & 0.43 & \textbf{0.29} & \textbf{0.29} & $2.10^{12}$ \\ 
&   16 & 0.71 & 0.75 & 0.81 & 0.78 & 0.62 & \textbf{0.61} & $3.10^{11}$ \\ 
&   28 & 1.85 & 1.85 & 1.67 & 1.82 & 1.62 & \textbf{1.62} & $10^{14}$ \\ 
&   50 & 5.39 & 5.42 & 5.48 & 5.48 & 5.23 & \textbf{5.19} & $10^{11}$ \\ 
   \hline
\multirow{8}{*}{\rotatebox{90}{ {($e$): $T_2$}}} &    0 & 0.37 & 0.33 & 0.26 & 0.24 & 0.15 & 0.14 & \textbf{0.09} \\ 
&   2 & 0.38 & 0.35 & 0.33 & 0.33 & 0.19 & 0.18 & \textbf{0.10} \\ 
&   3 & 0.33 & 0.29 & 0.33 & 0.31 & 0.20 & \textbf{0.19} & 3.3 \\ 
&   5 & 0.47 & 0.54 & 0.49 & 0.52 & \textbf{0.34} & \textbf{0.34} & 45 \\ 
&   9 & 1.22 & 0.90 & 1.03 & 1.01 & \textbf{0.87} & 0.89 & 110 \\ 
&   16 & 2.04 & 2.11 & 2.15 & 2.14 & \textbf{1.98} & 2.06 & 57 \\ 
&   28 & \textbf{5.37} & 5.67 & 5.49 & 5.72 & 5.46 & 5.66 & 810 \\ 
&   50 & 14.97 & 15.66 & \textbf{14.90} & 15.38 & 15.11 & 15.31 & 940 \\ 
   \hline
\end{tabular}
\caption{\label{gaussiancaseapp}
 {Multivariate Gaussian case: Mean quadratic error} of the estimates of the variance for the different methods and for different contamination scenarios and fractions $\delta$.}
\end{table}

\begin{table}[H]
\centering
\begin{tabular}{cc|rrrrrrr}
& \rotatebox[origin=r]{360}{$\delta$ ($\%$)}  & \rotatebox{270}{FixPoint (R)}    & \rotatebox{270}{FixPoint (W)}    & \rotatebox{270}{Gradient (R)}    & \rotatebox{270}{Gradient (W)}    & \rotatebox{270}{Robbins (R)}    & \rotatebox{270}{Robbins (W)}    & \rotatebox[origin=l]{270}{Variance}       \\  
   \hline
\multirow{8}{*}{\rotatebox{90}{{($a$): $U$}}} &   0 & 0.25 & 0.23 & 0.16 & \textbf{0.15} & 0.17 & \textbf{0.15} & 45.82 \\ 
&   2 & 0.42 & 0.37 & 0.30 & \textbf{0.24} & 0.32 & \textbf{0.24} & 35.23 \\ 
&   3 & 0.49 & 0.39 & 0.34 & \textbf{0.27} & 0.36 & 0.29 & 285.08 \\ 
&   5 & 0.75 & 0.70 & 0.68 & \textbf{0.59} & 0.71 & 0.62 & 211.14 \\ 
&   9 & 2.09 & 1.88 & 1.82 & \textbf{1.61} & 1.96 & 1.72 & 679.89 \\ 
&   16 & 7.23 & 6.20 & \textbf{6.08} & 5.64 & 6.59 & \textbf{6.08}  & $2.10^{ 3}$ \\ 
&   28 & 9.36 & 27.09 & 26.49 & \textbf{24.74} & 29.40 & 27.69 & $6.10^{3}$  \\ 
&   50 & 374.82 & 363.92 & 261.54 & \textbf{256.83} & 371.40 & 361.54 & $2.10^{4}$ \\ 
\hline
\multirow{8}{*}{\rotatebox{90}{{($b$): $T_1$}}} &   0 & 0.26 & 0.24 & 0.19 & \textbf{0.17} & 0.19 & \textbf{0.17} & 4.29 \\ 
&   2 & 0.37 & 0.27 & 0.21 & \textbf{0.18} & 0.21 & 0.19 & $7.10^{11}$ \\ 
&   3 & 0.36 & 0.36 & 0.24 & \textbf{0.21} & 0.26 & 0.22 & $1.10^{8}$ \\ 
&   5 & 0.48 & 0.38 & 0.30 & \textbf{0.26} & 0.32 & 0.27 & $6.10^{10}$ \\ 
&   9 & 0.77 & 0.75 & 0.67 & \textbf{0.57} & 0.70 & 0.59 & $9.10^{10}$ \\ 
&   16 & 2.17 & 1.80 & 1.89 & \textbf{1.69} & 1.96 & 1.72 & $2.10^{13}$ \\ 
  & 28 & 6.44 & 6.04 & 5.99 & \textbf{5.54} & 6.22 & 5.82 & $7.10^{10}$ \\ 
 &  50 & 30.74 & 29.61 & 28.60 & \textbf{27.70} & 30.85 & 29.56 & $5.10^{14}$\\
  \hline
\multirow{8}{*}{\rotatebox{90}{{($e$): $T_2$}}} &   0 &0.25 & 0.27 & 0.16 & \textbf{0.15} & 0.17 & \textbf{0.15} & 8.57  \\ 
&   2 & 0.28 & 0.27 & 0.20 & \textbf{0.18} & 0.20 &\textbf{ 0.18} & $4.10^{3}$  \\ 
&   3 & 0.29 & 0.25 & 0.20 & \textbf{0.18} & 0.20 & \textbf{0.18} & 62.81 \\ 
 &  5 &  0.33 & 0.33 & \textbf{0.21} & \textbf{0.21} & 0.22 & \textbf{0.21} & 5.69  \\ 
 &  9 & 0.43 & 0.48 & 0.36 & 0.34 & 0.35 & \textbf{0.33} & 387.11 \\ 
 &  16 &0.86 & 0.77 & 0.68 & 0.68 & 0.69 & \textbf{0.65} & 267.72\\ 
 &  28 & 1.96 & 1.92 & 1.91 & 1.85 & 1.90 & \textbf{1.82} & 164.25 \\ 
 &  50 & 6.66 & 6.41 & 6.59 & 6.48 & 6.59 & \textbf{6.41} & 279.53 \\ 
   \hline
\end{tabular}
\caption{ {Multivariate Student case: Mean quadratic error} of the estimates of the variance for the different methods and for different contamination scenarios and fractions $\delta$.}
\end{table}

\subsection{Additional simulation results for Mixture models}
\label{app:simResults}

Figures \ref{fig:simGMM1500} and \ref{fig:simTMM1500} provide the simulation results with smaller sample size, namely $n_k = 100$ observations per clusters, that is $n = 300$ observations in total. Figure \ref{fig:simGMM300} is the counterpart of Figure \ref{fig:simGMM1500}, and Figure \ref{fig:simTMM300} is this of Figure \ref{fig:simTMM1500}.

\begin{figure}[ht]
  \centering
    \begin{tabular}{c|m{.2\textwidth}m{.2\textwidth}|m{.2\textwidth}m{.2\textwidth}}
      & 
      \multicolumn{2}{c|}{Classification} & 
      \multicolumn{2}{c}{Parameter estimation} \\ 
      &
      \multicolumn{1}{c}{ARI} & \multicolumn{1}{c|}{$\widehat{K}$} & 
      \multicolumn{1}{c}{$MSE(\mu)$} & \multicolumn{1}{c}{$MSE(\Sigma)$} \\ 
      \hline
      ($a$) &
      \includegraphics[width=.2\textwidth, trim=10 10 10 50, clip=]{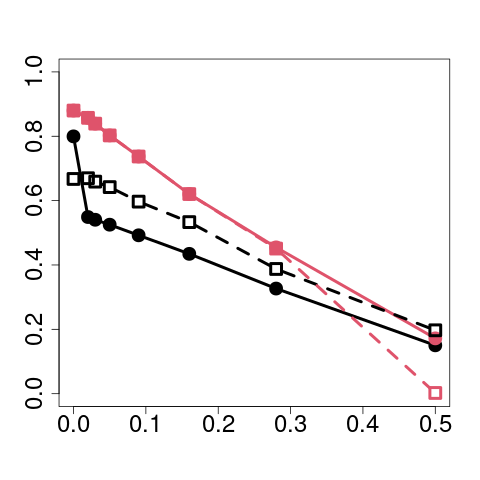} &
      \includegraphics[width=.2\textwidth, trim=10 10 10 50, clip=]{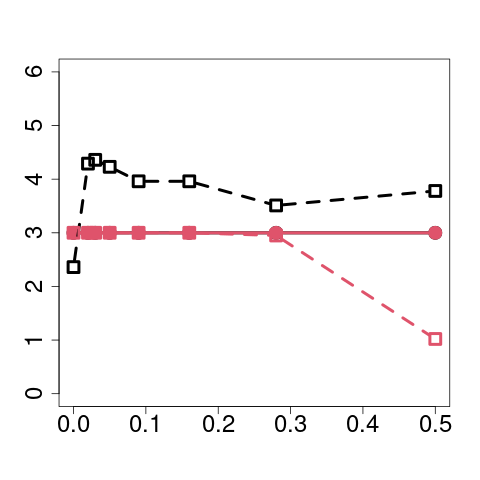} &
      \includegraphics[width=.2\textwidth, trim=10 10 10 50, clip=]{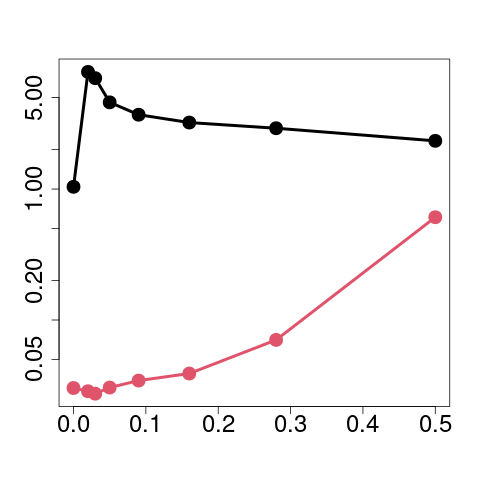} &
      \includegraphics[width=.2\textwidth, trim=10 10 10 50, clip=]{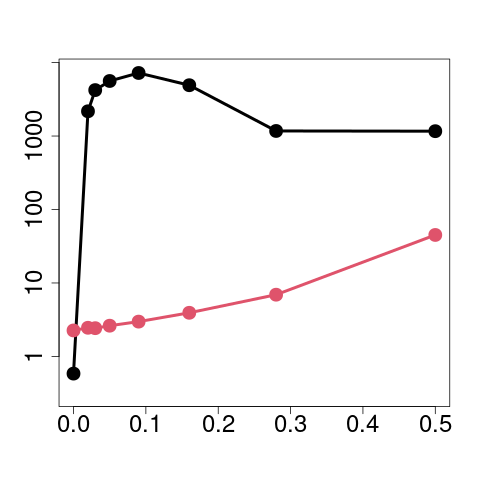} \\
      ($b$) &
      \includegraphics[width=.2\textwidth, trim=10 10 10 50, clip=]{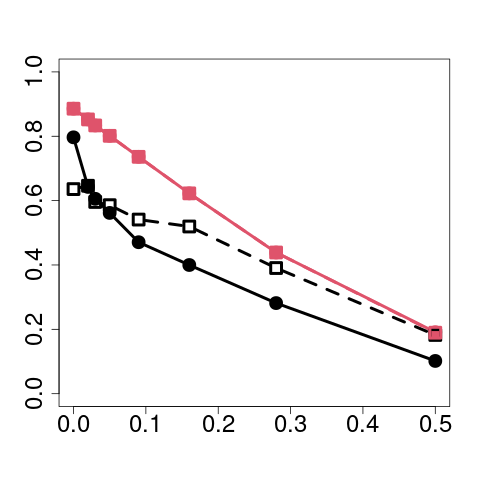} &
      \includegraphics[width=.2\textwidth, trim=10 10 10 50, clip=]{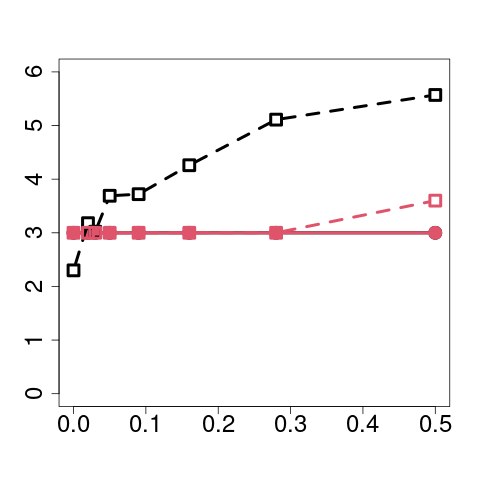} &
      \includegraphics[width=.2\textwidth, trim=10 10 10 50, clip=]{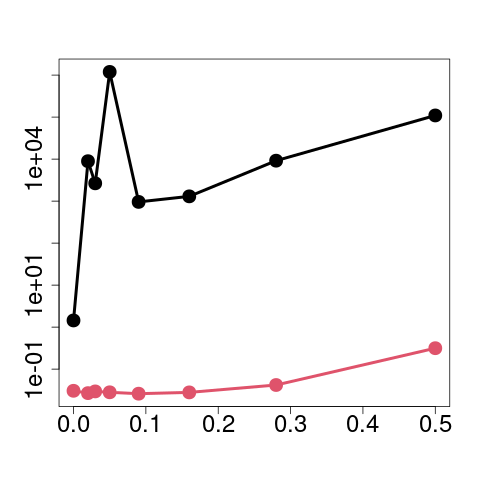} &
      \includegraphics[width=.2\textwidth, trim=10 10 10 50, clip=]{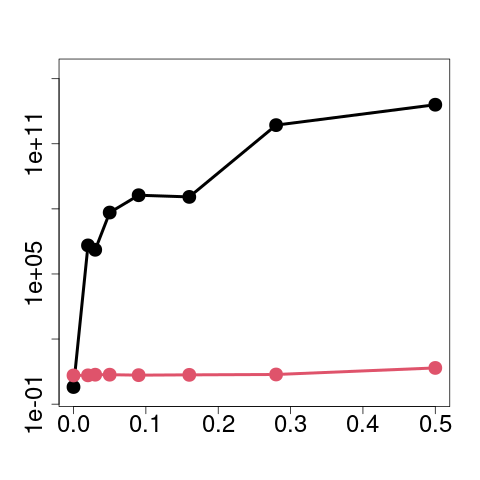} \\
      ($c$) &
      \includegraphics[width=.2\textwidth, trim=10 10 10 50, clip=]{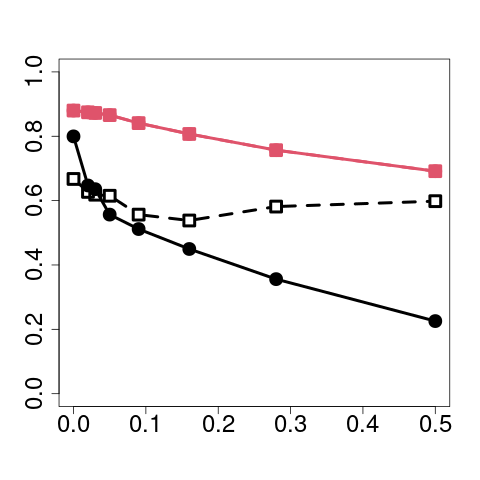} &
      \includegraphics[width=.2\textwidth, trim=10 10 10 50, clip=]{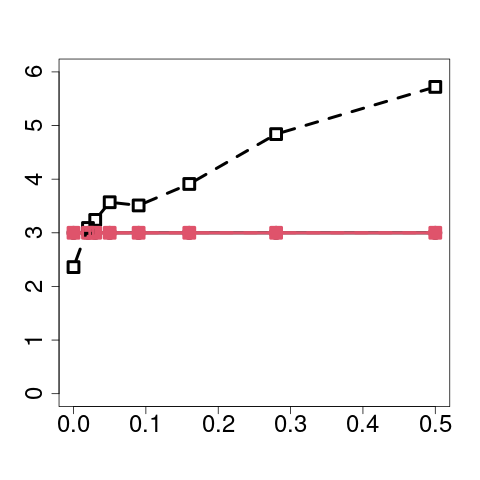} &
      \includegraphics[width=.2\textwidth, trim=10 10 10 50, clip=]{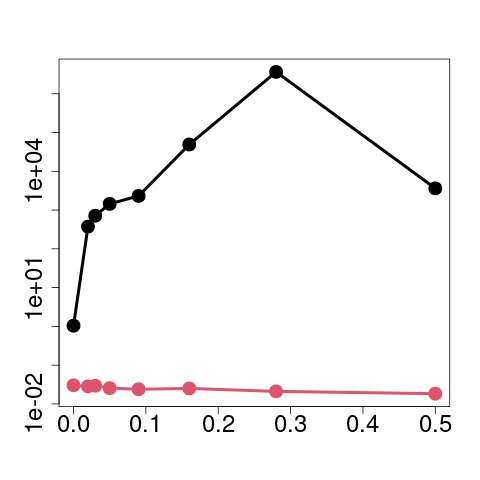} &
      \includegraphics[width=.2\textwidth, trim=10 10 10 50, clip=]{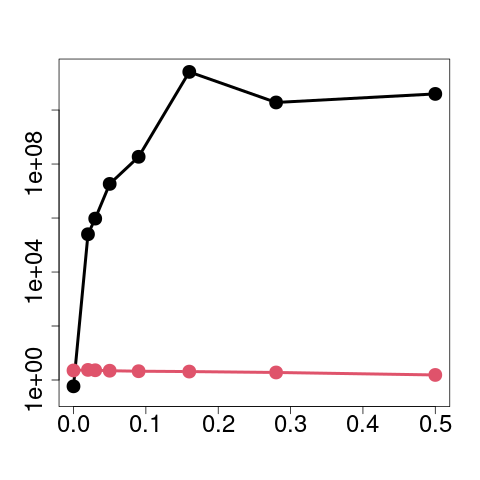} \\
      ($d$) &
      \includegraphics[width=.2\textwidth, trim=10 10 10 50, clip=]{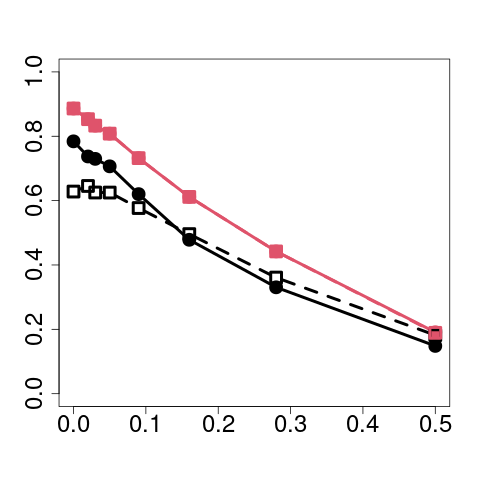} &
      \includegraphics[width=.2\textwidth, trim=10 10 10 50, clip=]{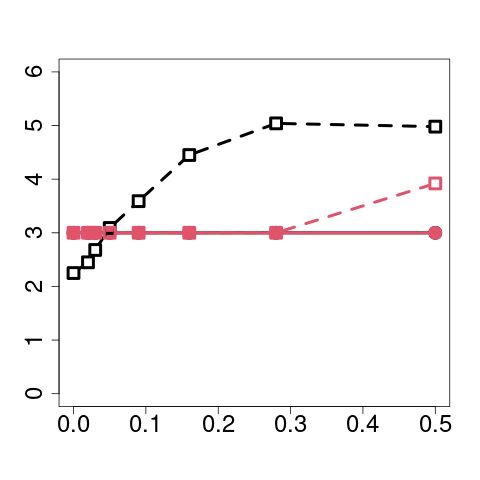} &
      \includegraphics[width=.2\textwidth, trim=10 10 10 50, clip=]{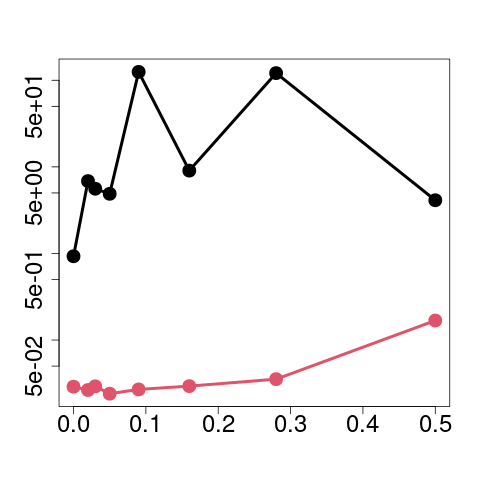} &
      \includegraphics[width=.2\textwidth, trim=10 10 10 50, clip=]{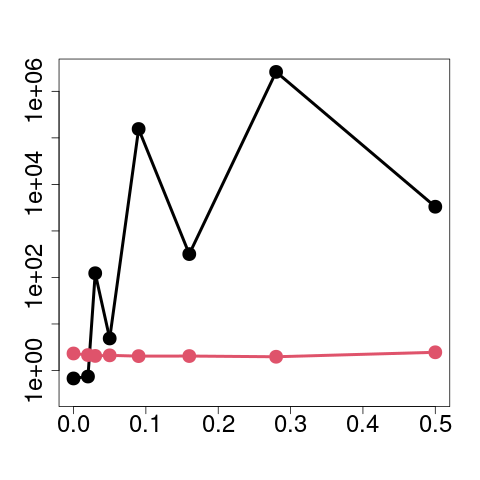} \\
      ($e$) &
      \includegraphics[width=.2\textwidth, trim=10 10 10 50, clip=]{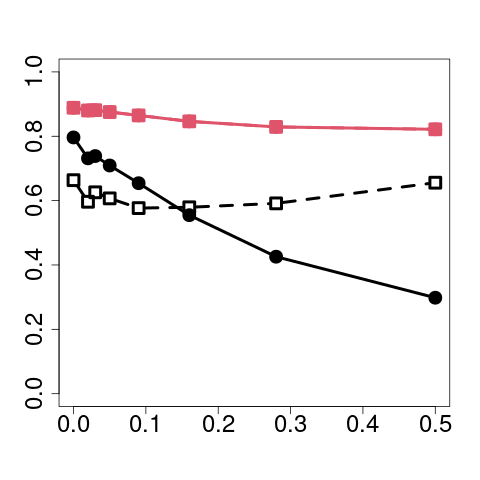} &
      \includegraphics[width=.2\textwidth, trim=10 10 10 50, clip=]{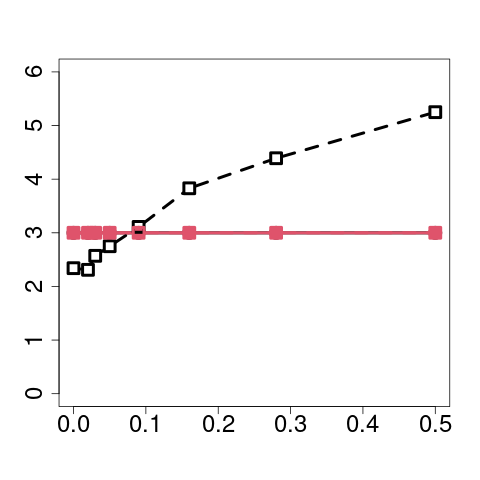} &
      \includegraphics[width=.2\textwidth, trim=10 10 10 50, clip=]{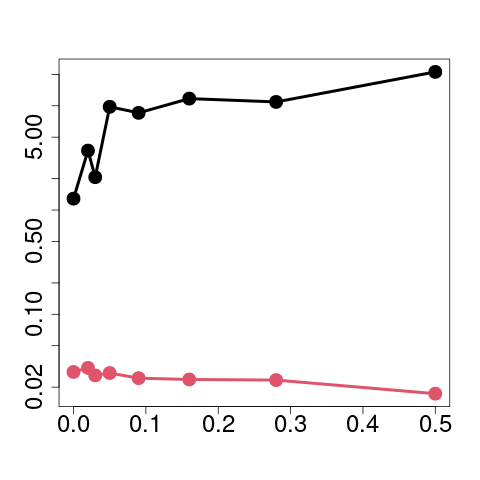} &
      \includegraphics[width=.2\textwidth, trim=10 10 10 50, clip=]{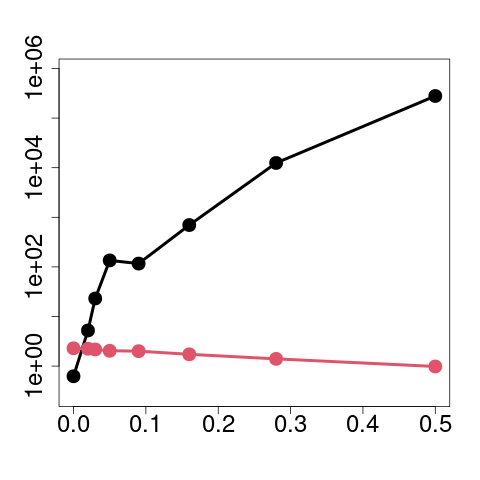} 
    \end{tabular}
    \caption{Gaussian mixture model: classification accuracy ($ARI$), estimated number of clusters $\widehat{K}$, estimation error fu the mean ($MSE(\mu)$) and for the variance ($MSE(\Sigma)$) for scenarios ($a$) to ($e$), with $n_k = 100$ observation in each of the $K^*$ clusters ($n = 300$). Same legend as Figure \ref{fig:simGMM1500}. \label{fig:simGMM300}}
\end{figure}

\begin{figure}[ht]
  \centering
    \begin{tabular}{c|m{.2\textwidth}m{.2\textwidth}|m{.2\textwidth}m{.2\textwidth}}
      & 
      \multicolumn{2}{c|}{Classification} & 
      \multicolumn{2}{c}{Parameter estimation} \\ 
      &
      \multicolumn{1}{c}{ARI} & \multicolumn{1}{c|}{$\widehat{K}$} & 
      \multicolumn{1}{c}{$MSE(\mu)$} & \multicolumn{1}{c}{$MSE(\Sigma)$} \\ 
      \hline
      ($a$) &
      \includegraphics[width=.2\textwidth, trim=10 10 10 50, clip=]{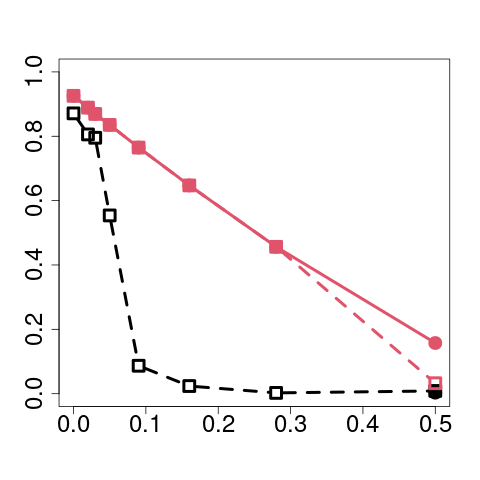} &
      \includegraphics[width=.2\textwidth, trim=10 10 10 50, clip=]{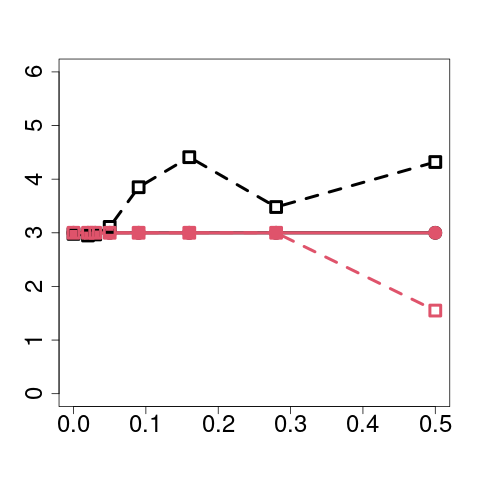} &
      \includegraphics[width=.2\textwidth, trim=10 10 10 50, clip=]{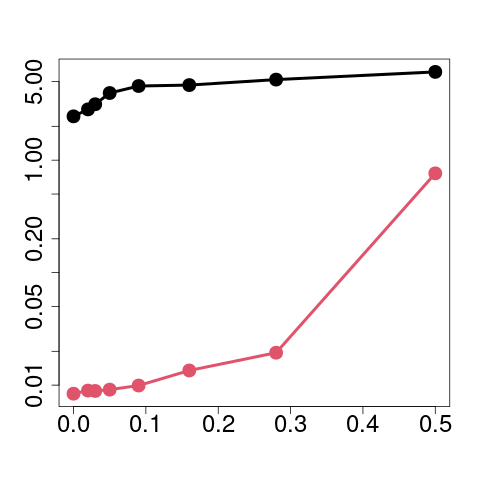} &
      \includegraphics[width=.2\textwidth, trim=10 10 10 50, clip=]{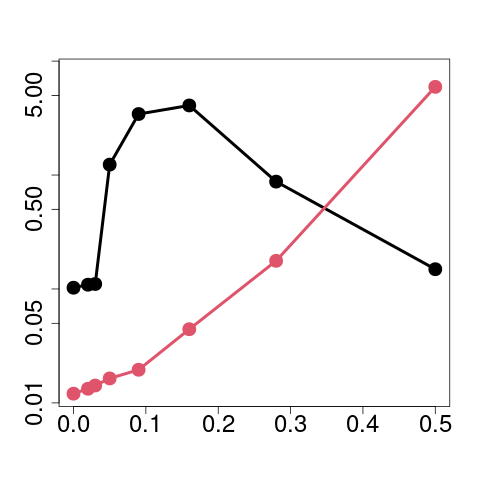} \\
      ($b$) &
      \includegraphics[width=.2\textwidth, trim=10 10 10 50, clip=]{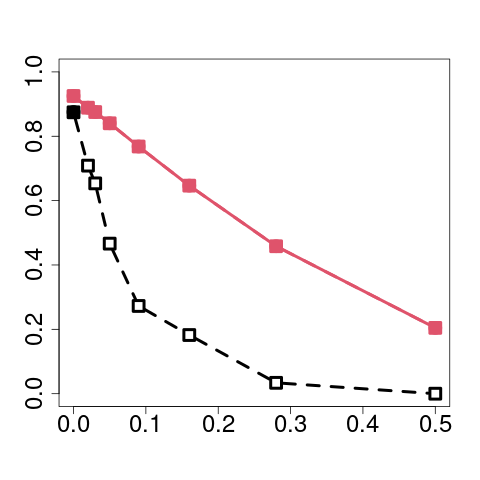} &
      \includegraphics[width=.2\textwidth, trim=10 10 10 50, clip=]{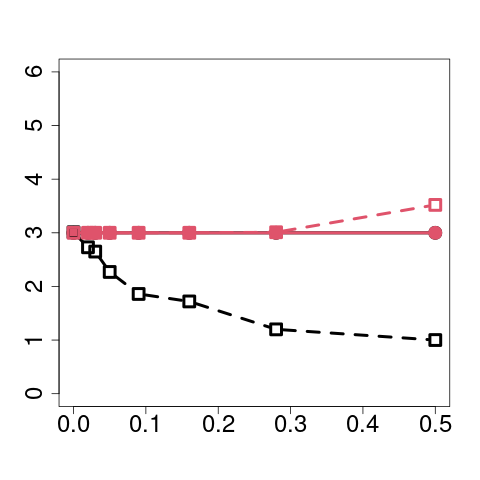} &
      \includegraphics[width=.2\textwidth, trim=10 10 10 50, clip=]{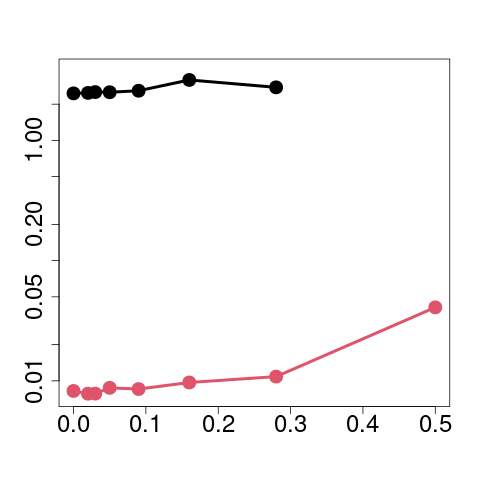} &
      \includegraphics[width=.2\textwidth, trim=10 10 10 50, clip=]{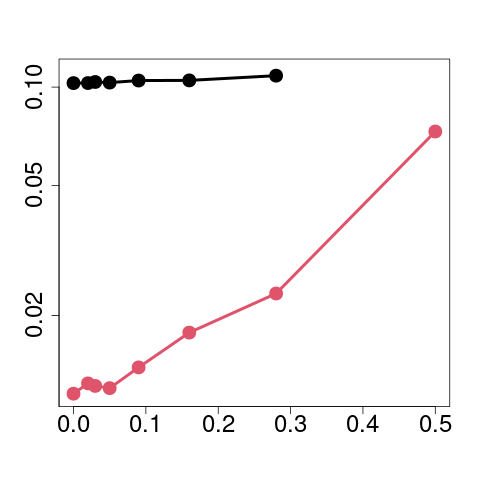} \\
      ($c$) &
      \includegraphics[width=.2\textwidth, trim=10 10 10 50, clip=]{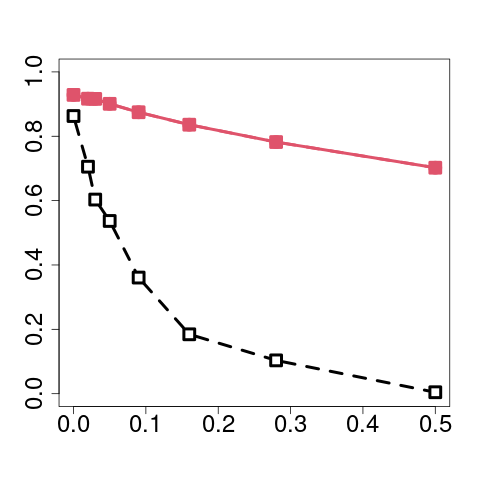} &
      \includegraphics[width=.2\textwidth, trim=10 10 10 50, clip=]{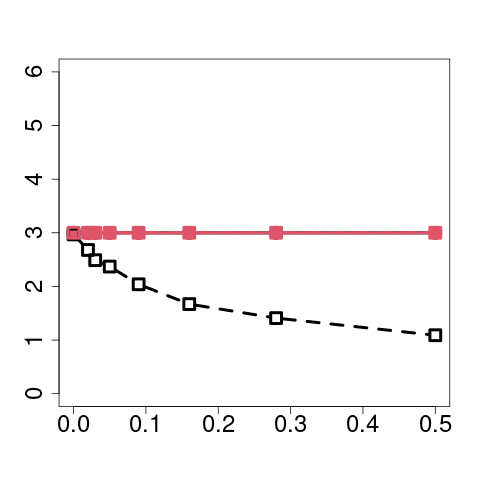} &
      \includegraphics[width=.2\textwidth, trim=10 10 10 50, clip=]{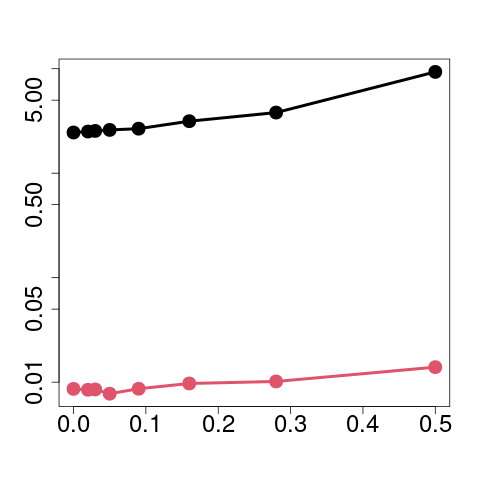} &
      \includegraphics[width=.2\textwidth, trim=10 10 10 50, clip=]{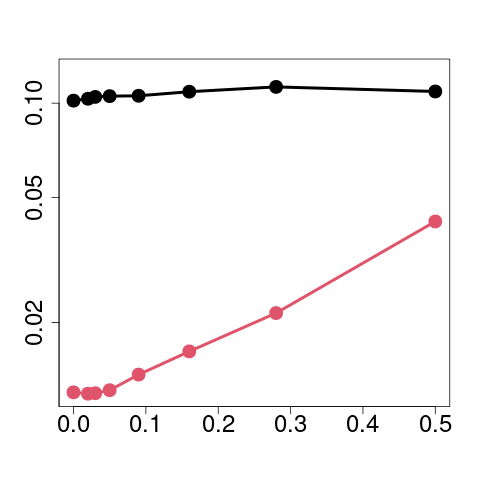} \\
      ($d$) &
      \includegraphics[width=.2\textwidth, trim=10 10 10 50, clip=]{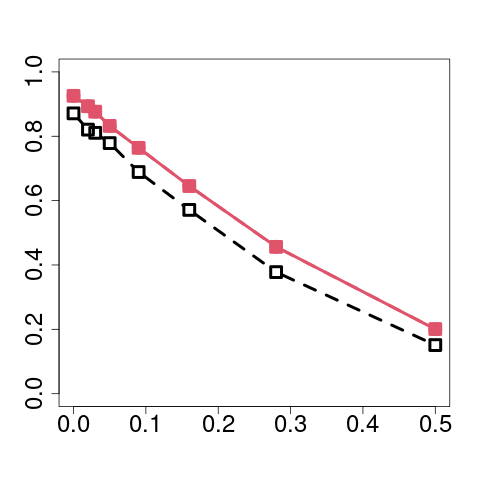} &
      \includegraphics[width=.2\textwidth, trim=10 10 10 50, clip=]{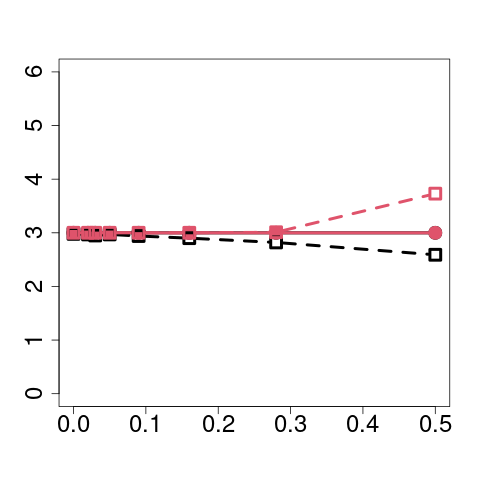} &
      \includegraphics[width=.2\textwidth, trim=10 10 10 50, clip=]{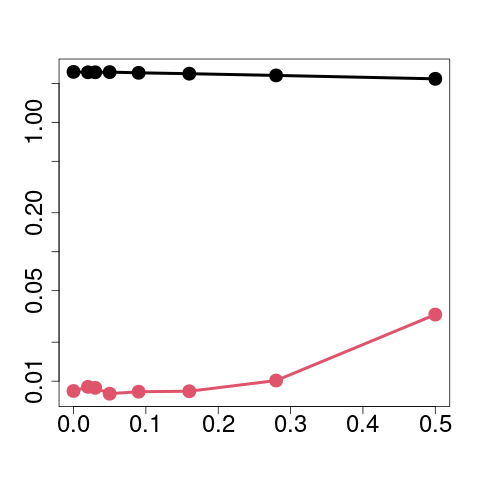} &
      \includegraphics[width=.2\textwidth, trim=10 10 10 50, clip=]{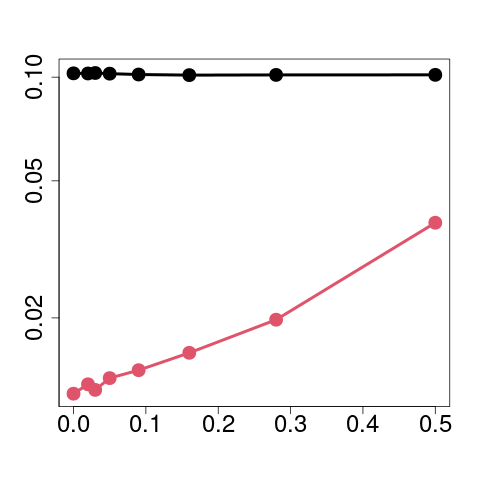} \\
      ($e$) &
      \includegraphics[width=.2\textwidth, trim=10 10 10 50, clip=]{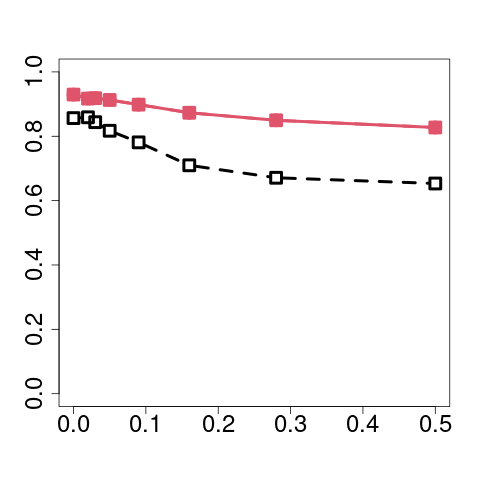} &
      \includegraphics[width=.2\textwidth, trim=10 10 10 50, clip=]{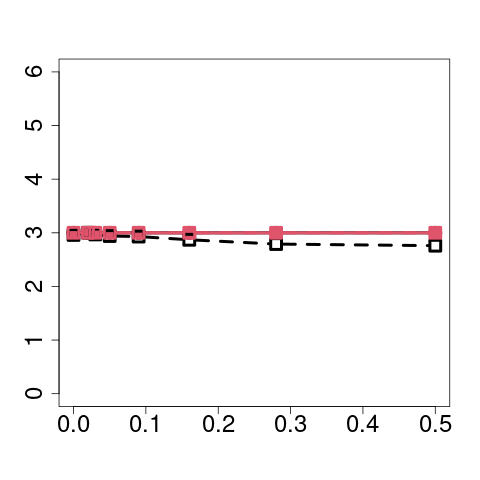} &
      \includegraphics[width=.2\textwidth, trim=10 10 10 50, clip=]{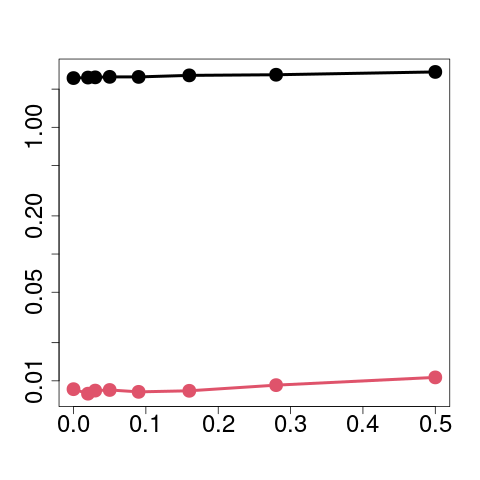} &
      \includegraphics[width=.2\textwidth, trim=10 10 10 50, clip=]{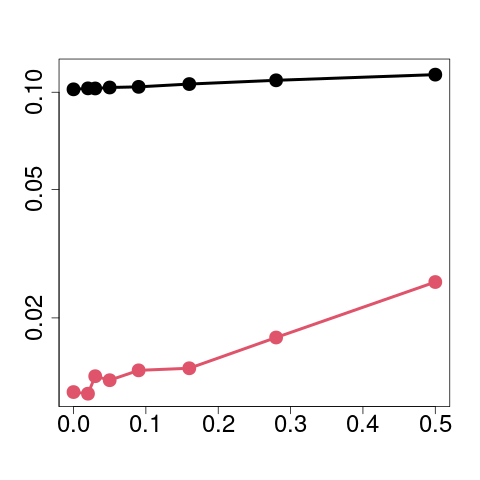} 
    \end{tabular}
    \caption{Student mixture model: classification accuracy ($ARI$), estimated number of clusters $\widehat{K}$, estimation error fu the mean ($MSE(\mu)$) and for the variance ($MSE(\Sigma)$) for scenarios ($a$) to ($e$), with $n_k = 100$ observation in each of the $K^*$ clusters ($n = 300$). Same legend as Figure \ref{fig:simGMM1500}. \label{fig:simTMM300}}
\end{figure}









\FloatBarrier

Figure {\ref{fig:simTMM1500fail} displays the proportion of simulations for which the inference algorithm for the inference of Student mixtures (teigen or RTMM) failed to converge.

\begin{figure}[ht]
  \centering
    \begin{tabular}{ccc}
      ($a$) & ($b$) & ($c$) \\
      \includegraphics[width=.2\textwidth, trim=10 10 10 50, clip=]{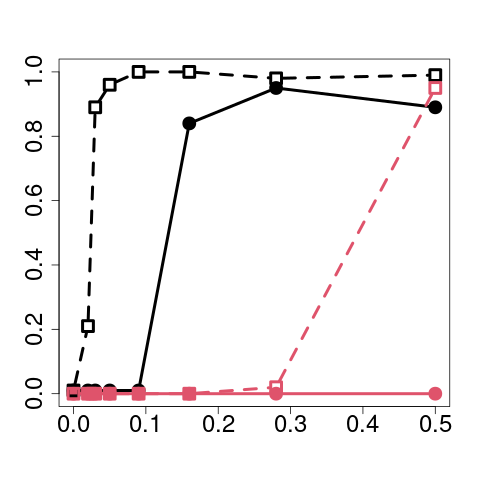} &
      \includegraphics[width=.2\textwidth, trim=10 10 10 50, clip=]{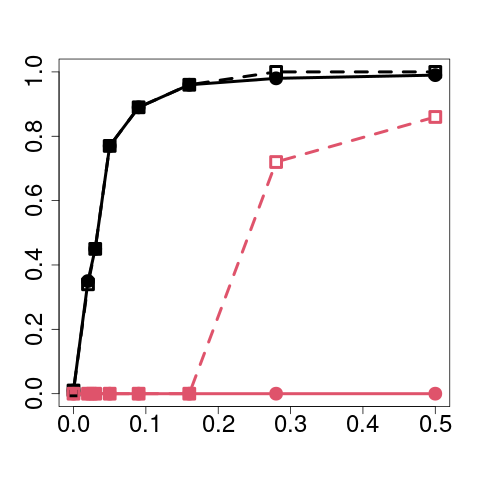} &
      \includegraphics[width=.2\textwidth, trim=10 10 10 50, clip=]{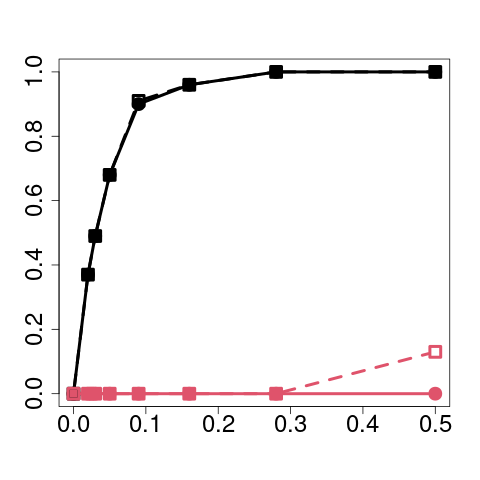} \\
      \hline
      & ($d$) & ($e$) \\
      & 
      \includegraphics[width=.2\textwidth, trim=10 10 10 50, clip=]{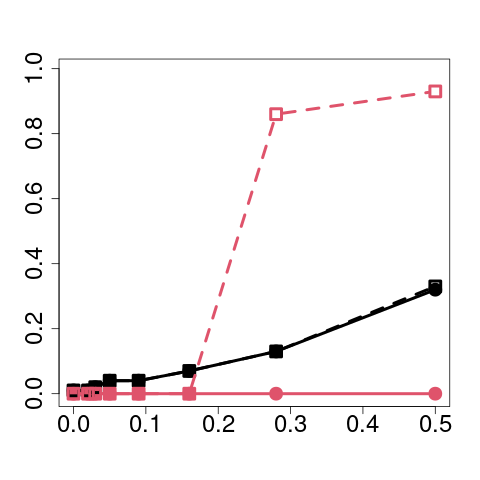} &
      \includegraphics[width=.2\textwidth, trim=10 10 10 50, clip=]{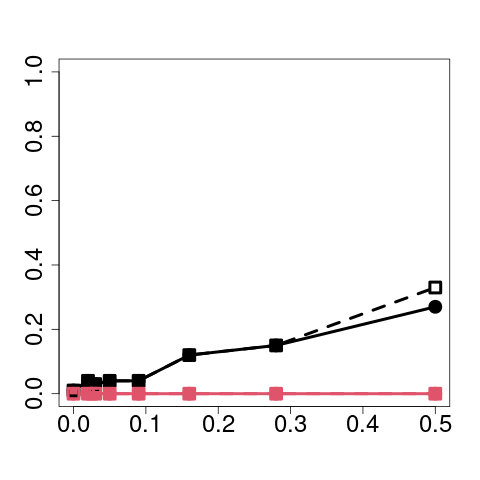}
    \end{tabular}
    \caption{Student mixture model: proportion of non-convergence of the inference algorithms for scenarios ($a$) to ($e$), with $n_k = 500$ observation in each of the $K^*$ clusters ($n = 1500$). Same legend as Figure \ref{fig:simGMM1500}. When the number $K$ is estimated (dotted lines), non-convergence that the algorithm did not converge for at least on value of $K$. \label{fig:simTMM1500fail}}
\end{figure}

\end{document}